\DeclareMathOperator*{\Probability}{Pr}
\newcommand{\prob}[1]{\Pr\left[#1\right]}
\newcommand{\Prob}[2]{\Probability_{#1}\left[#2\right]}
\newcommand{\psucc}{P_{succ}}
\newcommand{\psuccun}{\psucc}
\newcommand{\pfail}{P_{fail}}
\newcommand{\DRepsopt}[2]{k_{\max}(#2,#1)}
\newcommand{\psuccub}{\hat{P}_{succ}}
\renewcommand{\vec}[1]{\mathbf{#1}}
\newtheorem{theorem}{Theorem}
\newtheorem{lemma}{Lemma}
\newtheorem{assumption}{Assumption}
\newtheorem{remark}{Remark}
\newtheorem{corollary}{Corollary}
\newtheorem{example}{Example}
\newtheorem{definition}{Definition}
\newtheorem{observation}{Observation}
\title{
Reward Schemes and Committee Sizes in Proof of Stake Governance
}
\author{Georgios Birmpas \thanks{University of Liverpool, Email:  \texttt{G.Birmpas@liverpool.ac.uk}} \and Philip Lazos \thanks{IOG, Email:  \texttt{plazos@gmail.com}} \and Evangelos Markakis \thanks{Athens University of Economics and Business, and IOG, Email:  \texttt{markakis@gmail.com}} \and Paolo Penna \thanks{IOG, Email:  \texttt{paolo.penna@iohk.io}}}
\date{}
\begin{document}

\maketitle

\begin{abstract}
In this work, we investigate the impact of reward schemes and committee sizes on governance systems over blockchain communities. We introduce a model of elections with a binary outcome space, where there is a ground truth (i.e., a ``correct" outcome), and where stakeholders can only choose to delegate their voting power to a set of delegation representatives (DReps). Moreover, the effort (cost) invested by each DRep positively influences both (i) her ability to vote correctly and (ii) the total delegation that she attracts, thereby increasing her voting power. 
This model constitutes the natural counterpart of delegated proof-of-stake (PoS) protocols, where  delegated stakes are used to elect the block builders.    

As a way to motivate the representatives to exert effort, a reward scheme can be used based on the delegation attracted by each DRep. We analyze both the game-theoretic aspects and the optimization counterpart of this model. Our primary focus is on selecting a committee that maximizes the probability of reaching the correct outcome, given a fixed monetary budget allocated for rewarding the delegates. Our findings provide insights into the design of effective reward mechanisms and optimal committee structures (i.e., how many DReps are enough)  in these PoS-like governance systems.
\end{abstract}

\newpage

\section{Introduction}\label{sec:intro}

Our work falls under the broader topic of selecting an appropriate set of representatives out of a voting population. This has clearly been a prominent research agenda in social choice theory over the years and has been already investigated from various angles. 
As indicative directions, the performance of randomly selected committees has been extensively studied and there also exist various formulations of finding the optimal number of representatives either as an optimization problem or via game-theoretic models (described also in our related work section). At the same time this is also complemented by empirical research and the study of real world practices, spanning a horizon of several decades, see e.g. \cite{T72}.

We focus on addressing such questions for  governance systems in Proof-of-Stake (PoS) blockchain protocols (see e.g. \cite{larangeira2023security}). Several blockchain communities have already implemented or are currently designing governance policies, where stakeholders can propose a referendum on any relevant issue, which can then lead to an election. We believe there are some important aspects that can jointly differentiate such elections from other more traditional settings. First, in blockchain communities, voting among stakeholders is usually a weighted voting process, with the voting power corresponding to the stake owned by each user. This moves away from the classic ``one person-one vote" paradigm, which cannot be enforced due to the anonymity of users (someone could vote with several identities by splitting her stake). Secondly, in some blockchains, elections are implemented only by delegating voting power to representatives (known a priori) who will then vote with a weight equal to their total delegation they collected. This is done both in order to avoid having a huge number of transactions in the long run (a delegation can remain valid for future elections too, for as long as the user wants) but also to give the option to users who are not well-informed on an election topic to transfer their rights to someone that they trust their opinion. Third, it has been acknowledged that the users who act as representatives should be given some monetary compensation. The reason for this is two-fold. Representatives need to exert an effort  
 to advertize their opinion and attract voters. But more importantly, the elections under consideration may often concern a technical topic (like protocol parameter changes), where the representatives may need to spend time so as to become more informed and shape an opinion. Crucially, even after the vote, the system has no way to detect if the answer is correct.

The features highlighted in the previous discussion, motivate various interesting questions for the design of appropriate policies. In particular, an important question that arises is how to design a reward scheme for the representatives, given an available budget.
What are the relevant parameters that the reward should depend on? Ideally, we would like a reward scheme to induce good quality Nash equilibria, meaning that the representatives are incentivized to exert a sufficient amount of effort so that their vote contributes to making a good decision for the blockchain protocol. Therefore the rewards need to account for the fact that effort can be costly. At the same time, another relevant question is to understand how many representatives can be enough under such a scenario. Qualitatively, what we are interested in is whether a relatively small set of representatives can be sufficient or whether a large number of them is necessary to ensure a good election outcome, i.e., to ensure that the weighted majority of the representatives vote for the ground truth.

\subsection{Contribution}
Our work is motivated by the ongoing design of the governance system in an actual blockchain community, namely of the Cardano cryptocurrency \cite{1694}.
In Section \ref{sec:model} we introduce a game-theoretic model for capturing the main aspects of the elections under consideration. We focus on the scenario where the outcome space is binary and there is a {\it ground truth}, i.e., there is a correct outcome (say for the long-term evolution of the protocol), not a priori known to the representatives. The representatives can exert effort in order to find out the correct outcome which however comes at a cost (for information acquisition). At the same time, the exerted effort leads to a higher level of attracted delegations and in turn to (potentially) higher rewards, as we focus on reward schemes that depend on the volume of delegations (similarly to  delegated PoS protocols).

In Sections \ref{sec:warmup} and \ref{sec:candmech}, we analyze natural reward schemes under this setting regarding their equilibria. In Section \ref{sec:warmup}, we demonstrate, to our surprise, that perhaps the most natural rule where the representatives split the total budget in a proportional manner to their attracted delegations, is not so appropriate. The reason is that it induces low quality equilibria where not enough effort is made. In Section \ref{sec:candmech}, we advocate the use of a better mechanism where rewards are given only to voters who reach a desired threshold of delegations (and thus up to $k$ of them, for some parameter $k$). We characterize the set of pure equilibria, and show that under any equilibrium, the representatives exert a significant effort, and hence contribute towards electing the correct outcome. We also comment on relevant variants of this mechanism.

As the scheme of Section \ref{sec:candmech} imposes an upper bound on the representatives who will make an effort at equilibrium, this motivates the algorithmic question of how to select the number of representatives. We investigate this in Section \ref{sec:optsol}, as a budget constrained problem. We study various classes of cost functions for the exerted effort, including concave, convex and concave-convex costs and highlight the different behavior of the optimal solution under these classes. One of our main conclusions is that in many cases, a small number of representatives suffices for achieving a good quality outcome, e.g., this holds for concave costs and also for concave-convex functions. Another interesting finding is that in the convex domain the answer is heavily dependent on the available budget. Finally, we also demonstrate our findings in Section \ref{sec:S-shape} for a particular class of concave-convex cost functions that is motivated by works on experimental psychology.

\subsection{Related work}\label{sec:rw}

The topic of incentivizing committee members or delegates in elections to exert more effort has recently attracted attention partly due to applications over blockchain protocols and partly due to the overall rise of proxy voting and liquid democracy. 
The works most related to ours along these lines are \cite{GMT22}, \cite{CS23} and \cite{gersbach2024we}. In \cite{GMT22}, a similar cost model to ours is used for acquiring information over an election topic. There are however substantial differences in most other modeling aspects. In their work, a committee is chosen randomly among a given population, and with the same voting power per member, whereas in our case the voting power depends on the effort exerted. Secondly, they consider a different game in which 
the rewards are dependent on the tally difference between the two alternatives, which is quite different from our model, where the reward depends on the attracted delegation. Furthermore, the payments are transfers from the remaining population and not by some external source. In a recent follow up work \cite{gersbach2024we}, a similar model to \cite{GMT22} is studied, but where the monetary transfers depend on observed information acquisition costs. In both of these works, the particular structure of the underlying equilibria is studied and 
despite the differences with our setting, the conclusions made on the appropriate committee size are qualitatively of a similar flavor. Finally, in the work of \cite{CS23}, a different model is considered where the voters are categorized into well-informed and mis-informed agents, with different cost functions each. The reward scheme considered there is also different, with no delegation involved, and where the payments are dependent on the fraction of other voters who voted the same alternative.  

A different approach is taken in \cite{AG12} for determining a reward scheme and a committee size. Namely, a mechanism design model is presented where a committee is picked at random from the population, and where the rewards are obtained as the outcome of a truthful mechanism. For the question of finding the optimal number of representatives, there have also been other attempts that are not based on rewarding the voters. A game-theoretic model along this direction is presented in \cite{Martinelli06}. In \cite{RLH22}, a different model is studied where the optimal committee set is derived as the one maximizing the probability of voting for the correct outcome, given competence levels from some distribution. A similar idea is also used in \cite{MX18} under constraints on the feasible sets of delegates. Yet another approach is explored in \cite{ZP20} by adding the dimension of a social network for defining an optimal set of representatives in elections.   

When there is no a priori ground truth, alternative methodologies within proxy voting have also been considered for selecting a committee size. These approaches are based on optimizing the total welfare of the electorate. As an example, the performance of the {\it Sortition} method (randomly pick a subset of the voters of a given size) is studied in \cite{MST21}. The work  \cite{PS20} considers the performance of proxy voting, focusing on understanding when the proxy-elected outcome coincides with the outcome of direct voting. Finally, in \cite{AFLMP24} welfare guarantees are provided for a small number of representatives under incomplete preferences.

The topic of incentivizing effort has been extensively studied in economics within the 
field of contract theory \cite{Ross73}. The models there typically involve a principal who can offer a contract to an agent for performing some task. Recently there has been a renewed interest in such problems from an algorithmic viewpoint, and we refer to \cite{DFC24} for an upcoming survey. These models however are only distantly related  to our work, as the effort there is not tied to attracting delegations or voting power. In our model, rewards cannot depend on the ``correct answer/success" as the latter cannot be verified. This contrasts with prior contract theory (principal-agent) models, where reward schemes are tied to the success of the project.

\section{Our delegation model}\label{sec:model}

We are considering a voting scenario with a binary choice, consisting of a `good' and a `bad' outcome. This is initially unknown to laypeople voters, but delegates are ready to step up, using their expertise and effort to steer the community towards the right choice.

Suppose that we have $n$ delegation representatives (DReps) competing for votes. Each delegate can choose to exert some effort $x_i$, which leads to two positive effects: an increased chance of voting for the right outcome and an increased number of voters delegating to $i$. 
In particular, an effort $x_i\in [0, 1/2]$ leads to a probability of $p_i = 1/2 + x_i$ for $i$ to vote for the right outcome. 
At the same time, the DRep manages to accumulate delegation (and subsequently voting power) equal to
\begin{equation}\label{def:delegated}
w_i = \frac{x_i}{\sum_j x_j},
\end{equation}
with $\vec{w} = (w_1, \ldots, w_n).$ 
If none of the representatives exert any positive effort, then we assume $w_i=0$ for all $i$. 
The underlying assumption here is that a more informed and knowledgeable DRep is more likely to attract voters as well (a different interpretation is that voters follow DReps based on their long term performance on voting for the `supposedly good' outcome). We assume that, even after the vote, the system has no way to detect if the answer is correct (thus rewards cannot directly depend on this information).

The effort $x_i$ also comes with a cost, described by a cost function\footnote{Following the relevant literature \cite{GMT22,gersbach2024we}, we also consider a common cost function for all voters.} $c(x_i)$, for which we assume that it is strictly increasing, continuous and differentiable. We will consider various cases for the cost function in the sequel, such as linear, convex, concave, and concave-convex types observed in experimental psychology.

We focus on reward mechanisms that provide a monetary payment to each DRep based on the delegation vector $\vec{w}$. Such mechanisms are easy to implement as the total attracted delegation for each DRep can be measured (the effort $x_i$ cannot). If $f_i$ is the reward function for the payment to DRep $i$, her final utility is
\begin{equation}
u_i = f_i\left(\vec{w}\right) - c(x_i) \ . 
\end{equation}
\paragraph{Equilibria.}
A pure Nash equilibrium for a particular combination of reward functions $f_i$, and cost function $c$ is defined as an effort vector $\vec{x} \in [0, 1/2]^n$ such that for any player $i$ and $x_i' \neq x_i$ we have that:
\begin{equation}
    u_i(\vec{x}) \ge u_i(x_i', \vec{x_{-i}})\ .
\end{equation}

\paragraph{Budget Constraint.}
The total rewards given should be limited, such that given an available budget $B$:
\begin{equation}\label{eq:budget_constraint}
    \sum_{i=1}^n f_i(\vec{w}) \le B \ .
\end{equation}

\paragraph{Objective.}
Our social objective is to  maximize the probability that the `good' outcome is selected by the election.
Since each DRep $i$ has a voting power equal to $w_i$, the right outcome is selected only when the total weight of the DReps voting correctly is at least $1/2$. For the case where it is exactly $1/2$, we assume a random tie-breaking, so that the correct outcome is selected with probability $1/2$.
Therefore, given a profile $\vec x = (x_1, x_2, \ldots, x_n)$, the probability of success is the following quantity:
\begin{equation}\label{eq:score}
\psucc(\vec{x}):= \Prob{}{\sum_{i=1}^n w_i \cdot X_i > \frac{1}{2}} + \frac{1}{2}\Prob{}{\sum_{i=1}^n w_i \cdot X_i = \frac{1}{2}} ,
\end{equation}
where $X_i$ is a Bernoulli random variable with  $\Prob{}{X_i = 1} = 1/2 + x_i$.
To give an idea of how the success probability looks like as a function of $\vec{x}$, the first term in the right hand side of \eqref{eq:score} equals
$$\Prob{}{\sum_{i=1}^n w_i \cdot X_i > \frac{1}{2}} = \sum_{S: \sum_j w_j >1/2} \left( \prod_{i\in S: x_i>0} (1/2+x_i)\cdot \prod_{i\not\in S: x_i>0} (1/2-x_i) \right).$$
An equivalent formulation of $\psucc(\vec{x})$, that will be convenient in some sections is     \begin{align}\label{eq:score-rewritten}
\psucc(\vec{x})= \Prob{}{\sum_{i=1}^n Z_i > 0} + \frac{1}{2}\Prob{}{\sum_{i=1}^n Z_i = 0},  
\end{align}
where each $Z_i$ is a random variable with  $\Prob{}{Z_i = x_i} = p_{x_i}:= 1/2 + x_i$ and $\Prob{}{Z_i = -x_i} = p_{-x_i} = 1/2 - x_i = 1- p_{x_i}$. The interpretation when using the $Z_i$ variables, is that a voter $i$ contributes a weight of $x_i$ when voting for the correct outcome, and a weight of $-x_i$ otherwise.

\paragraph{Optimization benchmark.}
    Given a cost function $c(\cdot)$ and budget $B$, the optimization benchmark is defined by maximizing the probability $\psucc(x)$, subject to the following \emph{benchmark budget constraint}:
    \begin{align}\label{eq:budget-benchmark}
        \sum_i c(x_i) \leq B \ . 
    \end{align}
    We denote the corresponding optimum as $OPT(c,B):= \psucc(\vec{x^*})$, where $\vec{x^*}$ is an optimal solution to this problem. \begin{remark}
        Any reward sharing scheme that guarantees nonnegative utilities for the DReps and the budget constraint \eqref{eq:budget_constraint} must satisfy the benchmark budget constraint \eqref{eq:budget-benchmark}.
    \end{remark}

    We will also consider  variants with  ({\it i}) \emph{symmetric} efforts  and ({\it ii}) a \emph{maximum number $k$ of DReps}, 
    \begin{align} 
    x_i \in \{x,0\} && 
        \text{for all $i$ and for some $x$ (symmetry) } \label{eq:symmetry-constraint}\ ;  \\
        |i: x_i>0| \leq k  && 
        (\text{maximum number of DReps}) \ .  \label{eq:dreps-constraint}
    \end{align}

We denote by $OPT^\star(c,B)$ the optimum for the symmetric version \eqref{eq:symmetry-constraint}, by $OPT_k(c,B)$ the optimum for the one with at most $k$ DReps  \eqref{eq:dreps-constraint}, and by $OPT^\star_k(c,B)$ the one in which we have both. 
    By definition, the following relations hold:
    $
        OPT(c,B) \geq OPT^\star(c,B) \geq OPT^\star_k(c,B)$ and  $OPT(c,B) \geq OPT_k(c,B) \geq OPT^\star_k(c,B), 
    $
    which correspond to the optimality of symmetric solutions, and to the optimality of a fixed number of DReps, and combinations thereof.

Finally observe that 
 $x_i \leq x_{\max}(c,B)$ where $x_{\max}(c,B)$ is the largest $x\leq 1/2$ such that $c(x)\leq B$.

\section{Warmup: Equilibrium Analysis of Proportional Sharing}\label{sec:warmup}

We first analyze a very simple and natural reward mechanism where each representative obtains a reward equal to the percentage of the overall delegation that she accumulated, i.e., $ f_i(\vec{w}) =  w_i\cdot B$. Such approaches 
have been considered in many problems in the context of profit sharing games \cite{BachrachSV13}, project games \cite{BiloGM19}, contests \cite{BirmpasKLO22}, etc., due to their simplicity, and usually they also provide good guarantees in terms of performance.

The purpose of this section  is to exhibit that this reward rule is not an appropriate incentive scheme if we are interested in maximizing the probability of the correct outcome, in the sense that it can induce low quality equilibria.

For smooth reward functions $f_i$ and cost function $c$, we can derive the first order conditions that should hold at an equilibrium. If $c'$ is the derivative of $c$, these are 

\begin{equation*}
    \frac{\partial f_i(\vec{x})}{\partial x_i} = c'(x_i),
\end{equation*}

To illustrate our negative result, let us assume that the cost function is linear, $c_i(x_i) = ax_i$. We focus below on the symmetric Nash equilibria that may arise, i.e., profiles where all players exert the same effort.

\begin{theorem}
\label{thm:proportional-nash}
    The only symmetric Nash equilibrium under the proportional reward sharing rule, with a linear cost function, is the strategy profile with $x_i = x = \frac{B(n-1)}{an^2}$ for every $i$, as long as $x\in [0, 1/2]$. 
\end{theorem}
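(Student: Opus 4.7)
The plan is to write down the utility function $u_i(\vec{x}) = \frac{x_i}{\sum_j x_j} B - a x_i$ and then analyze best responses using first-order conditions, after checking concavity in $x_i$. Fixing the strategies $\vec{x}_{-i}$ of the other players with $S_{-i} := \sum_{j\neq i} x_j > 0$, I will compute
\begin{equation*}
    \frac{\partial u_i}{\partial x_i} = B \cdot \frac{S_{-i}}{(x_i + S_{-i})^2} - a, \qquad \frac{\partial^2 u_i}{\partial x_i^2} = -\frac{2 B \cdot S_{-i}}{(x_i + S_{-i})^3} < 0,
\end{equation*}
so $u_i$ is strictly concave in $x_i$ on $[0,1/2]$. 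Hence, whenever $S_{-i} > 0$, the best response is the unique value of $x_i$ making the first-order derivative vanish (truncated to $[0,1/2]$ if necessary).

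Next, I will impose symmetry $x_j = x$ for all $j$ (with $x > 0$), so that $S_{-i} = (n-1)x$ and $x_i + S_{-i} = nx$. Substituting into the first-order condition gives $B(n-1)x/(n x)^2 = a$, which solves uniquely to $x = B(n-1)/(a n^2)$. Under the hypothesis $x \in [0,1/2]$, this interior stationary point lies in the feasible range, and by strict concavity it is the global best response to the other players playing $x$, confirming it is a symmetric Nash equilibrium.

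Finally, I will argue uniqueness among symmetric profiles. Any symmetric profile with $x > 0$ must satisfy the interior first-order condition above (and no boundary solution at $x_i = 1/2$ arises as long as the unique stationary point lies in $[0,1/2]$), so it must equal $B(n-1)/(a n^2)$. The only remaining candidate is the all-zero profile $\vec{x} = \vec{0}$, where by convention $w_i = 0$ and hence $u_i = 0$ for all $i$. However, if player $i$ unilaterally deviates to any small $\varepsilon > 0$, then $w_i = 1$ (since all others play $0$) and her utility becomes $B - a\varepsilon > 0$, a strictly profitable deviation when $B > 0$. Therefore $\vec{0}$ is not a Nash equilibrium, and the symmetric equilibrium identified above is unique.

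The only mildly delicate step will be step three, justifying that no symmetric equilibrium hides at a boundary of $[0,1/2]^n$; the concavity argument together with the assumption $B(n-1)/(a n^2) \in [0,1/2]$ handles this cleanly, and the degenerate all-zero profile is ruled out by the explicit deviation argument above.
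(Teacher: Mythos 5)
Your proposal is correct and takes essentially the same approach as the paper: write the utility $u_i = w_i B - a x_i$, take first-order conditions, impose symmetry, and solve to get $x = \frac{B(n-1)}{an^2}$. You are in fact more thorough than the paper's terse proof, which stops at the first-order computation, whereas you additionally verify strict concavity of $u_i$ in $x_i$ (so the FOC really characterizes best responses), handle the boundary of $[0,1/2]$, and explicitly rule out the all-zero profile via a profitable $\varepsilon$-deviation.
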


When the parameters $B$ and $a$ are constants independent of $n$, the next theorem shows that for a large enough population of voters, the proportional sharing rule can induce bad equilibria. The reason is that the effort of each DRep is $O(1/n)$ as identified in Theorem \ref{thm:proportional-nash}, and hence the probability of each DRep voting correctly is only $1/2 + O(1/n)$. As a result, the probability of having the correct outcome elected goes to $1/2$ as $n$ becomes large, as established below.

\begin{theorem}\label{thm:pro-sharing-bad}
    Under the symmetric equilibrium profile of the proportional sharing rule, the probability of selecting the right outcome, as $n\rightarrow \infty$, tends to $1/2$. 
\end{theorem}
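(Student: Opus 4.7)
The plan is to plug the explicit equilibrium effort from \Cref{thm:proportional-nash} into the success probability formula and reduce the statement to a claim about Binomial tails with a vanishingly small bias.

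First, at the symmetric equilibrium all efforts coincide with $x = \frac{B(n-1)}{an^2}$, so every delegation weight equals $w_i = 1/n$. The success condition $\sum_i w_i X_i > 1/2$ therefore becomes $S_n := \sum_i X_i > n/2$ (with an analogous statement for the tie), where the $X_i$ are i.i.d.\ Bernoulli with $p_n = 1/2 + x$. Thus
\begin{equation*}
\psucc(\vec x) \;=\; \Pr[S_n > n/2] + \tfrac{1}{2}\Pr[S_n = n/2],
\end{equation*}
and $S_n \sim \mathrm{Binomial}(n,p_n)$ with $p_n = \tfrac{1}{2} + \Theta(1/n)$ when $B$ and $a$ are constants.

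Next I would estimate the two quantities separately. The mean $n p_n = n/2 + n x$ exceeds $n/2$ by $n x = \frac{B(n-1)}{an} \to B/a$, which is $O(1)$, while the standard deviation $\sigma_n = \sqrt{n p_n(1-p_n)} = \Theta(\sqrt n)$. Writing
\begin{equation*}
\Pr[S_n > n/2] \;=\; \Pr\!\left[\frac{S_n - n p_n}{\sigma_n} \;>\; \frac{n/2 - n p_n}{\sigma_n}\right]\!,
\end{equation*}
the right-hand threshold equals $-nx/\sigma_n = -\Theta(1)/\Theta(\sqrt n) \to 0$. By the central limit theorem (applied to the triangular array $\{X_i\}$ with $p_n \to 1/2$, where Lindeberg's condition is immediate since the $X_i$ are bounded), the standardized sum converges to a standard normal, so $\Pr[S_n > n/2] \to \Pr[Z > 0] = 1/2$.

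Finally, I would observe that the tie term vanishes: $\Pr[S_n = n/2] = 0$ for odd $n$, while for even $n$ Stirling's formula gives $\binom{n}{n/2}(p_n(1-p_n))^{n/2} = O(1/\sqrt n)$. Combining the two pieces yields $\psucc \to 1/2$. I don't anticipate a genuine obstacle: the only mild care is to verify that the CLT still applies when $p_n$ depends on $n$, but since $p_n \to 1/2$ this is a textbook Lindeberg application. If one preferred to avoid the CLT entirely, a symmetry argument comparing $\Pr[S_n > n p_n + nx]$ against $\Pr[S_n < n p_n - nx]$ plus a Chebyshev bound $\Pr[|S_n - n p_n| \leq nx] \leq (nx)^2/\sigma_n^2 = O(1/n) \to 0$ would produce the same conclusion.
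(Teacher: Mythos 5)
Your proof is correct, but it takes a genuinely different route from the paper's. The paper argues directly on the binomial sum: taking $B=1$ and $n$ odd, it writes $\psucc(\vec{x}) = \sum_{i>n/2}\binom{n}{i}(1/2+x)^i(1/2-x)^{n-i}$, factors out $(2an^2)^{-n}$, and asserts that each term $(an^2+2n-2)^i(an^2-2n+2)^{n-i}$ is asymptotically $(an^2)^n + o\left((an^2)^n\right)$, so that the limit collapses to $2^{-n}\sum_{i>n/2}\binom{n}{i}=1/2$. You instead view $S_n=\sum_i X_i$ as a $\mathrm{Binomial}(n,p_n)$ triangular array with $p_n = 1/2+\Theta(1/n)$, observe that the mean exceeds the majority threshold $n/2$ by only $nx \to B/a = O(1) = o(\sigma_n)$, and invoke the Lindeberg CLT so that the standardized threshold tends to $0$, giving the limit $1/2$, with the tie term killed by Stirling. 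Comparing the two: the paper's computation is elementary and self-contained, but its term-by-term asymptotic claim is delicate as stated --- for $i$ close to $n$ the factor $\left(1+\tfrac{2(n-1)}{an^2}\right)^i$ tends to $e^{2/a}\neq 1$, so individual terms are \emph{not} $(an^2)^n(1+o(1))$; the claim is only accurate for the bulk of the binomial mass near $i\approx n/2$, and making that precise requires essentially the concentration input your argument makes explicit. Your route is more robust: it handles even $n$ and the tie term without the paper's parity restriction, it isolates exactly why the result holds (bias $o(\sigma_n)$), and it extends verbatim to the paper's remark on concave costs $c(x)=x^b$, $b<1$, where the equilibrium effort $O(n^{-1/b})$ gives an even smaller bias.

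One caution about your optional ``CLT-free'' aside: the inequality you label Chebyshev, $\Pr\left[|S_n-np_n|\le nx\right] \le (nx)^2/\sigma_n^2$, is not a valid instance of Chebyshev --- that inequality bounds tail probabilities $\Pr[|S_n - np_n|\ge t]$ and says nothing about the mass of a central window of width $o(\sigma_n)$. To bound the central band you need an anti-concentration input, e.g.\ the point-mass bound $\max_j \Pr[S_n=j] = O(1/\sqrt{n})$ (by Stirling, or a L\'evy concentration-function bound), which over the $O(1)$ integer points in $[np_n-nx,\, np_n+nx]$ yields $O(1/\sqrt{n})\to 0$; combined with the fact that a binomial median lies within $1$ of its mean, this does produce a correct CLT-free argument. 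This flaw affects only the aside; your main CLT proof stands as written.
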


\begin{remark}
    An analogous conclusion also holds for concave functions of the form $c(x) = x^b$, with $b<1$. There the effort per player at a symmetric equilibrium can be even worse, and bounded by $O((1/n)^{1/b})$.
\end{remark}
\section{Candidate Mechanisms}\label{sec:candmech}

Given the previous conclusions, we suggest that in order to incentivize the delegates to elect the correct outcome, we need a payment rule that induces more competition among them, so as to make an effort to attract delegations. 

\subsection{Equilibria under the Threshold$(k)$ mechanism}\label{sec:eqtopk} 

Consider the mechanism where a delegate receives a reward only if she managed to collect at least a $1/k$-fraction of the total delegation. The reward received by a voter $i$ is:

$$ 
f_i(w_i) = \begin{cases}
			B/k & \text{if } w_i \geq 1/k\\ 
			0 & \text{otherwise}
		\end{cases} \ . 
$$

What kind of equilibria do we expect to have under this mechanism?
Naturally, we cannot have  equilibria with more than $k$ delegates exerting positive efforts, since only up to $k$ delegates can be paid, and the remaining would not have any incentive to make any effort. Instead, we will see that we can have equilibria with exactly $k$ delegates making an effort that are also symmetric as in \eqref{eq:symmetry-constraint}.

The following theorem characterizes the pure Nash equilibria of the Threshold$(k)$ mechanism. In particular, it demonstrates that in every equilibrium the set of players that exerts positive effort is of specific size, while it also provides specific conditions which the efforts of the players must obey.

\begin{theorem}
\label{thm:top-k}
    For the Threshold$(k)$ reward rule, every pure Nash equilibrium must be a symmetric Nash equilibrium in which exactly $k$ voters exert positive effort. Moreover, for any $x\in(0, 1/2]$ there exists an equilibrium where $k$ voters make effort equal to $x$ if and only if one of the following conditions hold:
    \begin{itemize}
        \item either $c(\frac{k}{k-1}x) \geq B/k \geq c(x)$ and $\frac{k}{k-1}x\in (0, 1/2]$,
        \item or $B/k \geq c(x)$ and $\frac{k}{k-1}x >  1/2$, 
    \end{itemize}

\end{theorem}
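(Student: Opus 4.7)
The plan is to split the argument into two stages: first, establish the structural claim that every pure Nash equilibrium consists of exactly $k$ players exerting the same positive effort; second, derive the necessary and sufficient conditions on $x$ for such a symmetric $k$-profile to be an NE, matching the two bullets in the statement.

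For the structural stage, let $S$ denote the set of players with positive effort at some NE and set $X = \sum_{i \in S} x_i$. I would first argue that every $i \in S$ must be paid: otherwise her utility is $-c(x_i) < 0$, strictly worse than the utility $0$ obtained by deviating to $x_i' = 0$. Since at most $k$ players can have $w_i \ge 1/k$ (the weights sum to $1$), this already yields $|S| \le k$. Next, for an active player $i$, I would compute the smallest effort $x_i'$ that preserves $w_i \ge 1/k$: the inequality $x_i'/(X - x_i + x_i') \ge 1/k$ rearranges to $x_i' \ge (X - x_i)/(k-1)$. Since $c$ is strictly increasing and the binary payment is unchanged by any larger effort, $i$ would strictly prefer the smallest such value; hence at equilibrium $x_i = (X - x_i)/(k-1)$, which simplifies to $x_i = X/k$ and $w_i = 1/k$. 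Applying this to every active player forces all efforts to be equal and yields $|S| \cdot X/k = X$, i.e., $|S| = k$.

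For the characterization stage, I would fix a candidate symmetric profile with $k$ players at effort $x \in (0, 1/2]$. For an active player $i$, the previous analysis shows she is already at the minimum-paid effort, so any downward move loses the payment and any upward move only raises the cost; the only relevant alternative is $x_i' = 0$ with utility $0$, giving the stability condition $B/k \ge c(x)$. For a non-participant $j$, entering with effort $x_j'$ yields $w_j = x_j'/(kx + x_j')$, which is at least $1/k$ iff $x_j' \ge \tfrac{k}{k-1} x$. If $\tfrac{k}{k-1}x > 1/2$ this threshold lies outside the feasible interval $[0,1/2]$, so $j$ cannot obtain payment and her best deviation $x_j' = 0$ imposes no constraint; otherwise $j$'s best deviation is entry at exactly $\tfrac{k}{k-1} x$, and non-profitability requires $B/k \le c(\tfrac{k}{k-1}x)$. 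Combining the two regimes yields exactly the two bullet conditions and both directions of the iff.

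The main technical delicacy is correctly identifying the two thresholds $(X - x_i)/(k-1)$ and $\tfrac{k}{k-1}x$: these arise because $w_i$ depends nonlinearly on the deviator's effort through the shared denominator, so one cannot naively hold $X$ fixed when reasoning about $i$'s downward deviation or about $j$'s entry. A secondary point is the split between the feasible-entry regime $\tfrac{k}{k-1}x \le 1/2$ and the infeasible-entry regime $\tfrac{k}{k-1}x > 1/2$, which is precisely the dichotomy between the two bullets of the theorem.
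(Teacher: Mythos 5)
Your proposal is correct and takes essentially the same route as the paper's proof: the bound $|S|\le k$ from $\sum_i w_i = 1$, the downward-shaving deviation that keeps $w_i \ge 1/k$ at lower cost, and the entry threshold $\bar{x} \ge \tfrac{k}{k-1}x$ with the feasibility split at $1/2$ all match the paper's argument. The only cosmetic difference is that you pin each active player at the minimal paid effort to get $x_i = X/k$ (hence $w_i = 1/k$) and then count to conclude $|S| = k$, whereas the paper reaches the same conclusion by applying the shaving deviation to the maximum-effort player in a contradiction argument assuming $|S| < k$.
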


\begin{proof}[Proof Idea]
The full proof is given in Appendix~\ref{app:proof-thm:top-k}. Here we just describe the main intuition and ideas. 
    First, no equilibrium can have more than $k$ players exerting positive effort (since only the $k$ with the highest effort get rewarded). Also, no equilibrium can have strictly less than $k$ players exerting positive effort (otherwise the player with highest effort can improve her utility by reducing slightly her effort). Hence, any equilibrium must have exactly $k$ players that exert positive effort. Imposing that deviating to zero effort is not profitable, together with $\sum_i w_i=1$, we get that all equilibria are of the form $(x,\dots, x, 0,\dots, 0)$, up to a renaming of the players. The bounds on $c(x)$ are then obtained by considering deviations restricted to equlibria of this form (and  that $w_i \geq 1/k$ is necessary to get the reward $B/k$).  
\end{proof}

There are some positive things that we can claim for this mechanism, showing that there are some advantages against the proportional scheme.
The first one is that in all its equilibria, the DReps who decide to exert a positive effort are making a much higher effort than in the symmetric equilibrium of the proportional scheme identified in Theorem \ref{thm:proportional-nash}. As an example, when $k$ is much smaller than $n$, and under a linear cost function, the effort can be seen to be $\Omega(1/k)$, which is significantly higher than $O(1/n)$ from Theorem \ref{thm:proportional-nash}. Hence we expect to have a higher probability of selecting the correct outcome.

Secondly, the next corollary shows that equilibria always exist and in fact all of them are close to the optimal effort under the constraint of using exactly $k$ DReps. This motivates further the question of identifying the optimal $k$ for maximizing the success probability for electing the correct outcome, which is the focus of Section \ref{sec:optsol}. 

\begin{corollary}
\label{cor:top-k-opt}
    Let $x^*(k)$ be the optimal effort for the optimization benchmark in which we impose a committee of size exactly equal to $k$. 
    \begin{itemize}
        \item The profile where $k$ people exert effort equal to $x=x^*(k)$ is an equilibrium.
        \item For any equilibrium profile with effort $x$, such that $\frac{k}{k-1}x\leq 1/2$, it holds that $x\geq (1-\frac{1}{k})x^*(k)$.
    \end{itemize}
\end{corollary}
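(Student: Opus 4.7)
The plan is to reduce both parts of the corollary to a direct application of Theorem~\ref{thm:top-k} once $x^*(k)$ is pinned down. Under the benchmark restricted to exactly $k$ symmetric DReps each with effort $x$, the budget constraint reduces to $k\cdot c(x)\leq B$, while $\psucc$ is monotone in the common effort level (each vote is correct with probability $1/2 + x$). Hence the maximizer is simply $x^*(k) = \min\{1/2,\; c^{-1}(B/k)\}$, and in particular $c(x^*(k)) \leq B/k$, with equality whenever $x^*(k) < 1/2$.

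For the first bullet, I would substitute $x = x^*(k)$ into the two alternative conditions of Theorem~\ref{thm:top-k} and split on cases. If $\tfrac{k}{k-1}x^*(k)\leq 1/2$, then in particular $x^*(k) < 1/2$, so $c(x^*(k)) = B/k$; strict monotonicity of $c$ then gives $c\bigl(\tfrac{k}{k-1}x^*(k)\bigr) > B/k$, and the first condition of Theorem~\ref{thm:top-k} applies. Otherwise $\tfrac{k}{k-1}x^*(k) > 1/2$, and since $c(x^*(k)) \leq B/k$ holds in all cases, the second condition applies. Either way the profile is an equilibrium.

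For the second bullet, the hypothesis $\tfrac{k}{k-1}x \leq 1/2$ forces any equilibrium with common effort $x$ into the first clause of Theorem~\ref{thm:top-k}, which yields $c\bigl(\tfrac{k}{k-1}x\bigr)\geq B/k \geq c(x^*(k))$. Strict monotonicity of $c$ then gives $\tfrac{k}{k-1}x \geq x^*(k)$, i.e.\ $x \geq \bigl(1-\tfrac{1}{k}\bigr)x^*(k)$, as required.

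The argument is essentially bookkeeping on Theorem~\ref{thm:top-k} combined with the explicit form of $x^*(k)$, so no single step is a real obstacle; the only subtlety is the case split in the first bullet, where one must be careful at the boundary $x^*(k) = 1/2$ (where $c(x^*(k))$ may be strictly below $B/k$) to ensure the second alternative of Theorem~\ref{thm:top-k} is invoked rather than the first.
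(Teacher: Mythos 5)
Your proof is correct and takes essentially the same route as the paper's: pin down $x^*(k)$ via the budget relation $c(x^*(k)) = B/k$ and then apply the two equilibrium conditions of Theorem~\ref{thm:top-k} directly, using strict monotonicity of $c$ for the second bullet. The only difference is that you explicitly handle the boundary case $x^*(k)=1/2$ (where the budget need not be exhausted, so only $c(x^*(k))\leq B/k$ holds) and the case split on whether $\frac{k}{k-1}x^*(k)\leq 1/2$, which the paper glosses over by asserting $kc(x^*(k))=B$ outright --- a minor extra care, not a different argument.
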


\begin{proof}
    When requiring a committee of exactly $k$ DReps, the optimal effort has to satisfy that $kc(x^*(k)) = B$.
    Hence $B/k = c(x^*(k))$. This means that for $x=x^*(k)$, the equilibrium conditions of Theorem \ref{thm:top-k} are satisfied.
    For the second part of the corollary, note that for any other equilibrium profile with effort $x$, and with $\frac{k}{k-1}x\leq 1/2$, by Theorem \ref{thm:top-k}, we have that $c(\frac{k}{k-1}x) \geq B/k = c(x^*(k)) $. Since our function is increasing, it should hold that $\frac{k}{k-1}x \geq x^*(k)$, and thus $x\geq (1-\frac{1}{k})x^*(k)$.
    \end{proof}

\subsection{Variants of Threshold$(k)$}\label{sec:var}

There are two variations of Threshold$(k)$ that are also of interest. Both of them require that we spend all the budget in contrast to the rule we have considered where we may end up spending less than the available budget. In the first variant below, the budget is split up to exhaustion, among the DReps who collected delegations that are at least a fraction of $1/k$:

\begin{align}
	f_i(w_i) = \begin{cases}
		\frac{B}{|j: w_j\geq 1/k|} & \text{if } w_i \geq 1/k\\ 
		0 & \text{otherwise}
	\end{cases}
\tag{Variant 1}
\end{align}

\begin{theorem}
\label{thm:v1}
For any continuous and strictly increasing cost function c, the mechanism of Variant 1 does not possess pure Nash equilibria.
\end{theorem}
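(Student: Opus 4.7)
The plan is to rule out pure Nash equilibria by first pinning down the exact form any candidate PNE must take, and then exhibiting a profitable deviation from every such candidate. Two elementary observations drive the structural step. First, for any player $j$ with $w_j<1/k$ the reward is $0$ and the utility is $-c(x_j)$; if $x_j>0$, the deviation $x_j\to 0$ strictly improves utility by strict monotonicity of $c$, so in any pure NE every player outside the reward set $S=\{i:w_i\ge 1/k\}$ must play $x_j=0$. Consequently, such players' weights stay at $0$ regardless of small perturbations by any $i\in S$. Second, if some $i\in S$ had $w_i>1/k$, she could decrease $x_i$ by a small $\epsilon>0$ and still have $w_i\ge 1/k$; because no $j\notin S$ can sneak into $S$ (their weights remain $0$), the size $|S|$ and thus the reward $B/|S|$ are unchanged while her cost strictly drops — a profitable deviation. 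Hence in any pure NE, $w_i=1/k$ exactly for each $i\in S$.

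Combining these observations with $\sum_i w_i=1$ forces $|S|=k$, and the identities $w_i=x_i/\sum_\ell x_\ell=1/k$ together with $x_j=0$ for $j\notin S$ pin down a common positive effort $x_i=x$ for all $i\in S$. So the only remaining candidates are the symmetric profiles $(x,\dots,x,0,\dots,0)$ with exactly $k$ positive coordinates.

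To kill every such candidate I would have some $i\in S$ slightly increase her effort from $x$ to $x+\epsilon$. A direct computation gives $w_i=(x+\epsilon)/(kx+\epsilon)>1/k$, so $i$ remains in the new reward set, while for every other $j\in S$, $w_j=x/(kx+\epsilon)<1/k$, so $j$ is kicked out. Thus the new reward set collapses to $\{i\}$, she collects the full budget $B$, and her utility becomes $B-c(x+\epsilon)$ instead of $B/k-c(x)$. By continuity of $c$, the surplus $B(k-1)/k-[c(x+\epsilon)-c(x)]$ is strictly positive for $\epsilon$ sufficiently small, so the deviation is profitable whenever $x+\epsilon\le 1/2$. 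The main obstacle is therefore the corner case $x=1/2$, where the upward perturbation is blocked by the strategy space $[0,1/2]$; here one has to close the argument by checking that the only alternative deviation available to $i$ — namely $x_i\to 0$, yielding utility $-c(0)$ — strictly beats $B/k-c(1/2)$, an inequality that needs to be assumed to hold in the intended parameter regime of the statement.
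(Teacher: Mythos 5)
Your structural analysis and your main deviation are essentially the paper's own argument, and in one respect cleaner: the paper separately eliminates non-symmetric profiles and symmetric profiles with $\ell<k$ positive efforts, whereas your single observation that every rewarded player must sit exactly at $w_i=1/k$ (else a small downward perturbation keeps her in the reward set at strictly lower cost), combined with $\sum_i w_i=1$, forces $|S|=k$ and symmetry in one stroke. One shared omission: the all-zero profile, where $\sum_i w_i=0\neq 1$, is not covered by this reduction (nor, strictly speaking, by the paper's case analysis) and must be dispatched separately -- a single player deviating to a tiny positive effort gets $w=1\geq 1/k$, collects the whole budget $B$, and profits by continuity of $c$.

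The corner case $x=1/2$ that you flag is not a defect of your write-up relative to the paper: the paper's proof has exactly the same hole, asserting ``we can always find an $\epsilon$'' without checking that $x+\epsilon$ stays in the strategy space $[0,1/2]$. Moreover, the hole is real, so your instinct that an extra assumption is needed is correct. Consider the profile with $k\geq 2$ DReps at effort $1/2$ and the rest at $0$. An outsider needs $\bar{x}/(k/2+\bar{x})\geq 1/k$, i.e.\ $\bar{x}\geq \frac{k}{2(k-1)}>1/2$, which is infeasible; an insider lowering to any $x'<1/2$ gets weight $x'/\left(\frac{k-1}{2}+x'\right)<1/k$ and reward $0$, so her best alternative is utility $-c(0)$; and the upward deviation is blocked by the cap. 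Hence whenever $B/k\geq c(1/2)-c(0)$ (e.g.\ $c(x)=x$ and $B=k$), this profile \emph{is} a pure Nash equilibrium of Variant 1, so Theorem~\ref{thm:v1} as stated fails at the boundary and needs either efforts restricted to $[0,1/2)$ or the parameter condition $B/k<c(1/2)-c(0)$. (This is the same boundary phenomenon the authors do handle explicitly in the second bullet of Theorem~\ref{thm:top-k}, via the condition $\frac{k}{k-1}x>1/2$.) In short: your proof matches the paper's route, is complete exactly where the paper's is, and is more careful than the paper precisely where both arguments break.
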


Consider now a different variation, where again up to $k$ DReps may receive a reward but the budget is then split proportionately among them:
\begin{align}
	f_i(w_i) = \begin{cases}
		B\cdot \frac{w_i}{\sum_{j: w_j\geq 1/k} w_j} & \text{if } w_i \geq 1/k\\ 
		0 & \text{otherwise}
	\end{cases}
\tag{Variant 2}
\end{align}

\begin{remark}
The reasoning in the the proof of theorem \ref{thm:v1} implies that this variant also does not have pure Nash equilibria that are symmetric among the delegates who exert some positive effort. We point out however, that, in contrast to Variant 1, there might be non-symmetric equilibria or equilibria with less than $k$ people making positive effort. The reason for this is that the reward of a player $i$ when $w_i \geq 1/k$, is now affected by the effort that she exerts. The existence of such equilibria highly depends on the cost function $c(\cdot)$, something that (along with their more complex nature) makes this variant less appealing than Threshold$(k)$.

\end{remark}

\section{Optimal Solutions}\label{sec:optsol}

Motivated by the previous results on proportional sharing and the subsequent Threshold$(k)$ mechanism and Corollary \ref{cor:top-k-opt}, we investigate now the problem from an optimization perspective. Hence, we focus on determining the optimal number $k$ of DReps for a given budget. That is, the problem of maximizing the probability of success in the symmetric case, where the effort exerted is the same for all the DReps who do so as described by \eqref{eq:symmetry-constraint} (which in some cases turns out to be the optimal even under the more general solution space where the voters could exert different efforts).

\subsection{General Bounds}\label{sec:genboun}

We start with proving a general upper bound on the probability of success which applies to any number of DReps and to the \emph{non-symmetric} case (Theorem~\ref{thm:psucc-ub-general}). This result plays a key role for the analysis of concave costs (Section~\ref{sec:concave}) as well as for the more general concave-convex ones (Section~\ref{sec:concave-convex}).  By a slight abuse of notation, in the sequel, for effort vectors of the form $\vec{x} = (x_1,0,\ldots, 0)$, we will use $\psucc(x_1)$ instead of $\psucc(\vec{x})$, and similarly for $\psucc(x_1,x_2,\ldots,x_n)$.

\begin{theorem}\label{thm:psucc-ub-general}
	For any number of DReps with (possibly 
 non-symmetric) efforts $\vec{x}= (x_1,x_2,\ldots,x_n)$, it holds that 
	\begin{align}
		\psucc(\vec{x})
  \leq \psucc(y) = 1/2 + y\ ,  && y = x_1+x_2+\cdots+x_n \ .
	\end{align}
\end{theorem}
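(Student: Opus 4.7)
The plan is to work with the alternative formulation \eqref{eq:score-rewritten}. Using the fact that $\prob{\sum_i Z_i > 0} + \prob{\sum_i Z_i < 0} + \prob{\sum_i Z_i = 0} = 1$, a one-line rearrangement of the definition of $\psucc$ gives
\begin{equation*}
\psucc(\vec{x}) \;=\; \tfrac{1}{2} + \tfrac{1}{2}\left(\prob{\textstyle\sum_i Z_i > 0} - \prob{\textstyle\sum_i Z_i < 0}\right) \;=\; \tfrac{1}{2} + \tfrac{1}{2}\,\ex{\mathrm{sign}\!\left(\textstyle\sum_i Z_i\right)},
\end{equation*}
so the theorem reduces to showing $\ex{\mathrm{sign}(\sum_i Z_i)} \leq 2y$.

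The main idea is to couple the $\pm 1$ variables $\epsilon_i := Z_i/x_i$ (for $i$ with $x_i>0$) with simpler ingredients. I would introduce mutually independent auxiliary variables $(B_i, \eta_i)_i$, where $B_i$ is a Bernoulli coin with $\prob{B_i = 1} = 2x_i$ (valid since $x_i \leq 1/2$) and $\eta_i$ is a fair $\pm 1$ coin, and define $\epsilon_i := +1$ when $B_i = 1$ and $\epsilon_i := \eta_i$ when $B_i = 0$; a direct check gives $\prob{\epsilon_i = +1} = 2x_i + \tfrac12(1 - 2x_i) = \tfrac12 + x_i$, reproducing the correct joint distribution of the $Z_i = x_i \epsilon_i$. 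Writing
\begin{equation*}
\textstyle\sum_i Z_i \;=\; A + W, \qquad A := \sum_i x_i B_i \geq 0, \qquad W := \sum_i x_i (1 - B_i)\,\eta_i,
\end{equation*}
I then condition on $\vec{B}$: conditionally, $W$ is a sum of independent symmetric random variables and hence is itself symmetric around $0$. Thus $\ex{\mathrm{sign}(A+W) \mid \vec{B}} = 0$ whenever $A = 0$, while trivially $\ex{\mathrm{sign}(A+W) \mid \vec{B}} \leq 1$ whenever $A > 0$. Averaging yields $\ex{\mathrm{sign}(A+W)} \leq \prob{A > 0}$, and a union bound finishes the argument: $\prob{A > 0} \leq \sum_i \prob{B_i = 1} = 2\sum_i x_i = 2y$.

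The step I expect to be least routine is identifying the coupling above---decomposing each biased $\pm 1$ indicator into a ``forced correct vote'' of probability $2x_i$ together with a fair fallback. Once this decomposition is in hand, the symmetry of $W$, the sign identity, and the final union bound all combine straightforwardly, and no smoothness or size assumptions beyond $x_i \in [0,1/2]$ are needed.
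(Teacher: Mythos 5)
Your proof is correct, and it takes a genuinely different route from the paper's. The paper proves the theorem by induction on $n$: it passes to the tie-favoring variant $\psuccub$ (ties resolved in favor of the correct outcome), conditions on whether DReps $1$ and $2$ agree, replaces the pair by a ``compound'' DRep of weight $x_1+x_2$ or $x_1-x_2$, compares the compound probabilities $p^C_{x_1\pm x_2}$ to the standard ones via explicit correction terms $q_{x_1\pm x_2}$ (Lemma~\ref{le:compound-probs}), and controls those corrections through interval probabilities of the remaining votes. You replace all of this with a single non-inductive coupling: each biased vote is a forced correct vote with probability $2x_i$ plus a fair coin otherwise, so $\sum_i Z_i = A + W$ with $A\geq 0$ and $W$ conditionally symmetric given $\vec{B}$, and the identity $\psucc(\vec{x}) = \tfrac12 + \tfrac12\,\ex{\mathrm{sign}\left(\sum_i Z_i\right)}$ together with a union bound gives $\psucc(\vec{x}) \leq \tfrac12 + \tfrac12\prob{A>0} \leq \tfrac12 + y$. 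Every step checks out: the decomposition needs only $x_i\leq 1/2$, the conditional symmetry of $W$ annihilates the sign expectation on $\{A=0\}$, and the union bound is exact at $n=1$, so your bound is tight for a single DRep. Your approach buys three things. First, it is shorter and handles the fair tie-breaking rule of $\psucc$ directly, whereas the paper's intermediate claim $\psuccub(\vec{x})\leq \psuccub(y)$ is actually false verbatim when ties occur with positive probability: for $x_1=x_2=0.1$ one computes $\psuccub(0.1,0.1)=0.36+0.48=0.84 > 0.7 = \psuccub(0.2)$, the problem being that $\psuccub$ of a zero-weight compound is $1$, not $1/2$, so the step $\psuccub(s-2x_2)=\psuccub(s)-2x_2$ in the paper's chain breaks when $x_1-x_2+x_3+\cdots+x_n=0$; your coupling never meets this edge case (and the theorem as stated, about $\psucc$, is of course still true --- your argument establishes it in full). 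Second, your argument in fact yields the slightly sharper estimate $\psucc(\vec{x}) \leq \tfrac12 + \tfrac12\left(1-\prod_i(1-2x_i)\right)$, which never trivially exceeds $1$, by evaluating $\prob{A>0}$ exactly instead of union-bounding. What the paper's approach buys in exchange is quantitative local information: the compound-DRep lemma measures precisely how merging two DReps into one of weight $x_1+x_2$ perturbs the success probability, a reusable ingredient within their framework that your global coupling does not provide.
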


Note that the above result implies that a single DRep is optimal whenever effort $y$ does not violate the budget constraint, i.e., if $c(y)\leq B$. The dependency on the budget $B$ is somewhat unavoidable, even for the case of convex costs (Section~\ref{sec:convex}).

\subsection{Concave (and linear) case: One DRep is optimal}\label{sec:concave}

An immediate consequence of Theorem~\ref{thm:psucc-ub-general} is that for concave (and thus for linear) costs, the optimal solution consists of a single DRep. 
Indeed, for any concave cost function $c(\cdot)$, and any effort vector $(x_1,\ldots, x_n)$, the solution which utilizes  a single DRep with effort $y=x_1+\cdots+ x_n$ still satisfies the budget constraint \eqref{eq:budget-benchmark}, thus implying the following.
 \begin{corollary}\label{cor_concave}
     For any concave cost function, there is always an optimal solution consisting of a single DRep with effort $x_1$ such that $c(x_1)=B$, where $B$ is the budget. Hence, using asymmetric efforts does not help in this case. Moreover, for the strictly concave case, one DRep is strictly better than several DReps. 
 \end{corollary}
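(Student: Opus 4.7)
The plan is to combine Theorem~\ref{thm:psucc-ub-general} with the subadditivity of concave costs. Given any feasible profile $\vec{x}=(x_1,\ldots,x_n)$ with $\sum_i c(x_i)\leq B$, I would compare it against the single-DRep profile $(y,0,\ldots,0)$ where $y:=\sum_i x_i$ (with the natural cap at $1/2$). Theorem~\ref{thm:psucc-ub-general} already gives $\psucc(\vec{x})\leq \psucc(y)=1/2+y$, so the only thing to check is that this single-DRep profile is itself feasible, i.e., lies in the budget set.

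For feasibility, I would invoke subadditivity of concave functions: assuming the standard normalization $c(0)=0$ (no effort, no cost), any concave $c:[0,1/2]\to\mathbb{R}_{\geq 0}$ satisfies $c(a+b)\leq c(a)+c(b)$. The quick derivation is $c(a)=c\bigl(\tfrac{a}{a+b}(a+b)+\tfrac{b}{a+b}\cdot 0\bigr)\geq \tfrac{a}{a+b}c(a+b)$ and symmetrically for $b$, then sum. Iterating gives $c(y)\leq \sum_i c(x_i)\leq B$, so the single-DRep profile with effort $y$ (assuming $y\leq 1/2$) is feasible. If $y>1/2$, I would use the single-DRep profile with effort $1/2$: monotonicity of $c$ gives $c(1/2)\leq c(y)\leq B$, and $\psucc(1/2)=1\geq \psucc(\vec{x})$ trivially. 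Finally, since $\psucc(x_1)=1/2+x_1$ is strictly increasing in $x_1$, the optimal single-DRep profile saturates the budget at $c(x_1)=B$ (using continuity and strict monotonicity of $c$ to invert).

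For the second part of the corollary, when $c$ is strictly concave and $n\geq 2$ with at least two positive efforts, the subadditivity inequality becomes strict: $c(y)<\sum_i c(x_i)\leq B$. This gives strictly positive slack in the budget, so I can pick a single-DRep effort $x_1>y$ with $c(x_1)\leq B$, yielding $\psucc(x_1)=1/2+x_1>1/2+y\geq \psucc(\vec{x})$.

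The main obstacle is really just a bookkeeping one: handling the cap at $1/2$ and the implicit normalization $c(0)=0$ cleanly, since the paper's cost-function assumptions only state strict monotonicity, continuity and differentiability. Beyond that, the argument is essentially a one-line consequence of Theorem~\ref{thm:psucc-ub-general} plus the well-known fact that concave functions vanishing at the origin are subadditive on $\mathbb{R}_{\geq 0}$.
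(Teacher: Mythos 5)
Your proof is correct and follows essentially the same route as the paper: Theorem~\ref{thm:psucc-ub-general} bounds $\psucc(\vec{x})$ by the single-DRep profile with effort $y=\sum_i x_i$, and subadditivity of concave costs (which the paper leaves implicit) shows that profile remains within the budget \eqref{eq:budget-benchmark}. Your explicit handling of the cap at $1/2$, the normalization $c(0)=0$, and the strict-concavity slack argument merely fills in bookkeeping the paper omits.
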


\subsection{Concave-convex case: Too many DReps are not optimal}\label{sec:concave-convex}

In this section, we consider the more general class of cost functions, namely, concave-convex ones, which arise in many real world  settings as a means to capture more accurately the cost of information acquisition (in Section~\ref{sec:S-shape} we elaborate on this). An example showing the shape of concave-convex functions can be seen in Figure \ref{fig:geometric}.
We shall prove below general bounds on the maximum number of DReps that produce optimal solutions, and the corresponding  efforts. The main message of this section is that the optimal number of DReps cannot be too high, and can be upper bounded by appropriate parameters with a geometric interpretation that we define below (and hence also establishing lower bounds on the minimal effort at an optimal solution).

For $x>0$, the maximum number of DReps with equal effort $x$  that we can use, given a cost function $c(\cdot)$ and budget $B$, is equal to
 $ \DRepsopt{c}{x} := \left\lfloor \frac{B}{c(x)} \right\rfloor\ . $

The following lemma states some natural properties of the success probability. 
\begin{lemma}\label{le:prob-succ-mon}
	The success probability $\psuccun(x,k)$  is monotone increasing in both the effort $x$ and in the number of DReps $k$. That is, $\psuccun(x,k)\leq \psuccun(x,k+1)$ and $\psuccun(x,k)< \psuccun(x',k)$ for all $x'>x$.  
\end{lemma}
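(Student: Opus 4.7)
The plan is to prove the two monotonicity statements separately, exploiting that in any symmetric profile the weights $w_i = 1/k$ are independent of $x$, so the success event reduces to ``a majority of $k$ i.i.d.\ Bernoulli$(1/2+x)$ votes is correct'' with the stated tie-breaking rule. Let $C_k$ denote the number of correct votes among the $k$ delegates; then $\psucc(x,k) = \prob{C_k > k/2} + \tfrac{1}{2}\prob{C_k = k/2}$.

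For strict monotonicity in the effort $x$, I would use a direct monotone coupling. Let $U_1,\dots,U_k$ be i.i.d.\ uniform on $[0,1]$ and define $X_i^{(t)}$ to be the indicator of $U_i \le 1/2 + t$, so that the corresponding sum $C_k^{(t)}$ has the desired Binomial$(k,1/2+t)$ distribution for every effort $t$. For $x' > x$ this coupling gives $C_k^{(x')} \geq C_k^{(x)}$ pointwise; since the success-indicator (including the tie-breaking value $1/2$) is nondecreasing in $C_k$, this immediately yields $\psucc(x,k) \leq \psucc(x',k)$. To upgrade to a strict inequality, I would exhibit a positive-probability event on which the success-indicator strictly increases --- for instance the event that $U_i \in (1/2+x,\, 1/2+x']$ for every $i$, which has probability $(x'-x)^k > 0$ and on which $C_k^{(x)} = 0$ while $C_k^{(x')} = k$, so the success-indicator jumps from $0$ to $1$.

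For monotonicity in $k$, I would condition on the $(k+1)$-th vote $X_{k+1}$, so that $C_{k+1} = C_k + X_{k+1}$, and split on the parity of $k$. When $k = 2m$ is even, a direct expansion yields $\psucc(x,k+1) - \psucc(x,k) = x \cdot \prob{C_k = m} \geq 0$; intuitively, the added vote breaks a previously tied configuration in the correct direction with net gain $x$. When $k = 2m+1$ is odd, the same expansion produces contributions proportional to $\prob{C_k = m}$ and $\prob{C_k = m+1}$ with opposite signs, and the symmetry $\binom{2m+1}{m} = \binom{2m+1}{m+1}$ combined with the identity $(1+2x)(1/2-x) = (1/2+x)(1-2x)$ shows that the two contributions cancel exactly, giving $\psucc(x,k+1) = \psucc(x,k)$.

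The main obstacle is the odd-$k$ case of the $k$-monotonicity, where the gains from pushing previously tied profiles across the new strict majority threshold are exactly offset by losses from previous strict majorities becoming ties under the larger even-sized committee. Once that algebraic cancellation is verified, everything else reduces to straightforward conditional-probability bookkeeping and a standard monotone coupling.
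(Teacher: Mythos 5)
Your proof is correct, and it takes a genuinely different route from the paper's. For monotonicity in $k$, the paper argues combinatorially: it maps each winning subset $S$ of the $k$-committee injectively to $\hat S = S\cup\{k+1\}$ in the $(k+1)$-committee and compares subset probabilities term by term. You instead condition on the $(k+1)$-st vote and compute the increment exactly, split by parity; both branches check out: writing $p=1/2+x$, $q=1/2-x$ and $C_k$ for the number of correct votes, for even $k=2m$ the increment is $(p-\tfrac12)\prob{C_k=m}=x\prob{C_k=m}\geq 0$, while for odd $k=2m+1$ it is $\tfrac{p}{2}\prob{C_k=m}-\tfrac{q}{2}\prob{C_k=m+1}$, and since both terms equal $\tfrac12\binom{2m+1}{m}p^{m+1}q^{m+1}$ the increment is exactly $0$, as you claim. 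This buys strictly more than the paper's argument: an exact increment formula, plus the observation that the inequality in $k$ is tight at every odd-to-even step (consistent with $\psucc(x,1)=\psucc(x,2)=1/2+x$). It also sidesteps a slip in the paper's write-up: there, the mapped probability is displayed as $P_{\hat S}(x)=(1/2+x)^{|S|+1}(1/2-x)^{k-|S|-1}$, whose exponents sum to $k$ rather than $k+1$ (the correct value is $P_S(x)\cdot(1/2+x)\leq P_S(x)$), and the strict inequality in the paper's chain cannot hold at odd-to-even steps; your conditioning bookkeeping is the natural repair, since it also accounts for the winning subsets of the enlarged committee that do not contain DRep $k+1$. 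For monotonicity in $x$, the paper shows that each losing-subset probability, rewritten as $(1/4-x^2)^{|S|}(1/2-x)^{k-2|S|}$, is pointwise decreasing in $x$, so the failure probability decreases; your monotone coupling via common uniforms establishes the same stochastic dominance in one stroke and makes strictness transparent through the event, of probability $(x'-x)^k>0$, on which all $k$ votes flip from wrong to correct and the success value rises from $0$ to $1$. Both routes are elementary; yours is sharper and more robust (the coupling never uses the binomial structure), while the paper's stays closer to the subset-expansion formula it reuses in later proofs.
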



\begin{figure}
	\centering
	\includegraphics[scale=.8]{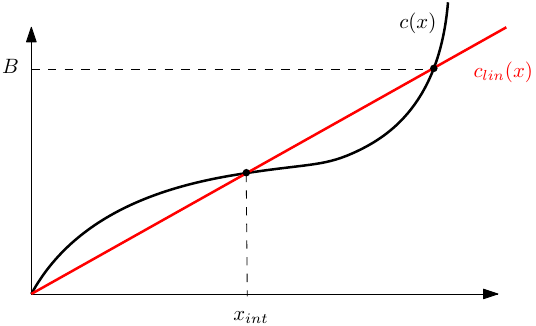}
	\caption{The idea of Theorem~\ref{thm:concave-convex-lb-effort}.}
	\label{fig:geometric}
\end{figure}

Our first result (Theorem~\ref{thm:concave-convex-lb-effort} below) is based on a geometric argument shown in Figure~\ref{fig:geometric}, and formally captured by this definition. 

\begin{definition}\label{def:concave-convex-line-general}
For any concave-convex cost function $c(\cdot)$ and for any budget $B$,
    we let $c_{lin}(x) = \alpha_{c,B}\cdot x$ be the linear cost function such that $c(\cdot)$ and $c_{lin}(\cdot)$  take the same value $B$ at some common point $x_1>0$, i.e., $c(x_1)=B=c_{lin}(x_1)$. Then, we denote by $x_{int} = x_{int}(c,B)$ the largest value $x$ such that $c_{lin}(x) \leq c(x)$ for all $x\in [0,x_{int}(c,B)]$ such that $c(x)\leq B$. 
\end{definition}

We first provide a general lower bound on the minimum effort  for the optimal symmetric solutions, thus implying that the number of DReps cannot be too large. 

\begin{theorem}\label{thm:concave-convex-lb-effort}
	For any concave-convex cost function $c(\cdot)$ and any budget $B$, the  optimal symmetric solution with identical efforts must use at least an effort level of $x_{int}$, and thus at most $\DRepsopt{c}{x_{int}}$ DReps, where $x_{int}=x_{int}(c, B)>0$ is given in Definition~\ref{def:concave-convex-line-general}.  
\end{theorem}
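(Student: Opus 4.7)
My plan is to show that any symmetric feasible profile with common effort $x<x_{int}$ is dominated by the single-DRep profile with effort $x_1$ from Definition~\ref{def:concave-convex-line-general}. Since $x_1\ge x_{int}$ (the defining inequality $c_{lin}\le c$ holds with equality at $x_1$, while efforts $x>x_1$ satisfy $c(x)>B$ and are excluded from the definition's range), this domination forces the optimum to use effort at least $x_{int}$. The bound on the number of DReps then follows from monotonicity of $c$.

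The core step invokes Theorem~\ref{thm:psucc-ub-general} to collapse a multi-DRep symmetric profile into a single DRep whose effort is the sum. Concretely, take any symmetric profile with $k$ DReps exerting common effort $x\in(0,x_{int})$. By Definition~\ref{def:concave-convex-line-general}, on $[0,x_{int}]$ the cost curve lies above the line, so $c(x)\ge \alpha_{c,B}\,x$. Combined with the budget constraint $k\,c(x)\le B=\alpha_{c,B}\,x_1$, this yields the key inequality $kx\le x_1$. Now Theorem~\ref{thm:psucc-ub-general} produces
\[
\psucc(x,x,\ldots,x)\ \le\ 1/2+kx\ \le\ 1/2+x_1\ =\ \psucc(x_1),
\]
and the right-hand side is attained by the single-DRep profile with effort $x_1$, which is feasible since $c(x_1)=B$. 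Hence the original profile is no better than one using effort $x_1\ge x_{int}$, so the optimum can be realized at some effort $x^\ast\ge x_{int}$. The bound on the committee size follows from monotonicity of $c$: $k\le \lfloor B/c(x^\ast)\rfloor\le \lfloor B/c(x_{int})\rfloor=\DRepsopt{c}{x_{int}}$.

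\textbf{Main obstacle.} The only nontrivial step is the geometric one: translating the defining property of $x_{int}$ into the usable form $c(x)\ge \alpha_{c,B}\,x$ on $[0,x_{int}]$ and combining it with the budget inequality to obtain $kx\le x_1$. All of the probabilistic content is already packaged in Theorem~\ref{thm:psucc-ub-general}, so once this geometric inequality is in place the rest of the proof is immediate. I would verify that the concave/linear case $x_{int}=x_1$ is not pathological: there, the bound simply forces $k=1$, consistent with Corollary~\ref{cor_concave}.
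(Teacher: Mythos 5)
Your proof is correct, and it rests on the same geometric core as the paper's: compare $c$ to the linear benchmark $c_{lin}(x)=\alpha_{c,B}\,x$ of Definition~\ref{def:concave-convex-line-general}, use $c\geq c_{lin}$ on $(0,x_{int})$ to control the budget, and collapse everything to a single DRep. Where you diverge is in how the collapse is executed. The paper's proof chains through Lemma~\ref{le:prob-succ-mon}: it passes from $\psucc(x,\DRepsopt{c}{x})$ to $\psucc(x,\DRepsopt{c_{lin}}{x})$ and then to $\psucc(x_{int},\DRepsopt{c_{lin}}{x_{int}})$, asserting $\DRepsopt{c_{lin}}{x_{int}}=1$ at the end. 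Note that the middle step of that chain increases the effort while \emph{decreasing} the committee size, which is exactly the pattern that Lemma~\ref{le:prob-succ-mon} alone (monotonicity in each argument separately, in the increasing direction) does not cover --- it is really the $k'=1$ instance of Assumption~\ref{as:prob-succ-lin-mon}, which in turn is supplied by Theorem~\ref{thm:psucc-ub-general}. You instead invoke Theorem~\ref{thm:psucc-ub-general} directly: from $k\,c(x)\leq B=\alpha_{c,B}\,x_1$ and $c(x)\geq \alpha_{c,B}\,x$ on $(0,x_{int})$ you extract $kx\leq x_1$, hence $\psucc(x,\ldots,x)\leq 1/2+kx\leq 1/2+x_1=\psucc(x_1)$, dominating by the feasible single-DRep profile at $x_1\geq x_{int}$. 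This buys two things: you never need the count $\DRepsopt{c_{lin}}{x_{int}}$ to equal one (which requires $x_{int}>x_1/2$ and is not obviously guaranteed for every concave-convex shape), and the effort/committee trade-off is handled in a single clean application of the theorem that the paper itself flags as the key tool for this section. One small point of care: your conclusion, like the paper's, is weak domination, so the statement should be read as ``some optimal symmetric solution uses effort at least $x_{int}$''; if you want the stronger ``must'', note that the inequality in Theorem~\ref{thm:psucc-ub-general} is strict whenever at least two coordinates are positive (the $-2x_2 P(\overline{S})$ term in its inductive step), which rules out optima with $k\geq 2$ and $x<x_{int}$.
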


\begin{remark}
    Note that for smaller values of the budget $B$, the bounds provided by the previous theorem get better, as the critical value $x_{int}=x_{int}(c,B)$ increases (see Figure~\ref{fig:geometric}), which rules out more values from the optimum. Conversely, for increasing values of the budget $B$, the bounds of the previous theorem become weaker, as the opposite happens.  
\end{remark}

Moving on, the influence of the budget $B$ suggested by the previous remark is the main focus of the remaining of this (and the next) section. We first note that Theorem~\ref{thm:concave-convex-lb-effort} implies a \emph{stronger} version of the result for concave costs. 
The idea  is shown in  Figure~\ref{fig:geometric-tangent}, and formalized by the next definition.

\begin{definition}\label{def:points}
	For any concave-convex cost function $c(\cdot)$, let $x_{inflection}^c$ denote its inflection point (where the switch from concave to convex occurs). Moreover, let  $x_{tangent}^c$ be the largest  point  such that the line from the origin passing through $x_{tangent}^c$ is entirely below the cost function, and the two curves intersect at $x_{tangent}^c$. That is, the linear cost $c_{lin}(\cdot)$ such that $c_{lin}(x_{tangent}^c)= c(x_{tangent}^c)$ satisfies $c_{lin}(x) \leq c(x)$ for all $x$ such that $c(x)\leq B$.   
\end{definition}

\begin{figure}
	\centering
	\includegraphics[scale=.8]{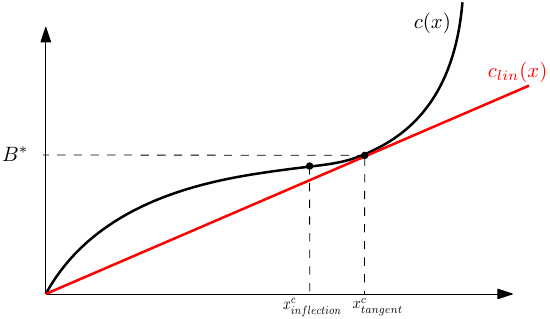}
	\caption{The idea of Theorem~\ref{thm:cocave-convex-tangent} with the inflection point $x_{inflection}^c$ and point $x_{tangent}^c$.}
	\label{fig:geometric-tangent}
\end{figure}

Observe that for $B= B^*:= c(x_{tangent}^c)$, the point  $x_{int}(c,B)$ in Theorem~\ref{thm:concave-convex-lb-effort} coincides with $x_{tangent}^c$, and $\DRepsopt{c}{x_{int}}=1$; the same is true for $B< B^*$ as we  consider the cost function restricted to a smaller interval -- condition $c(x)\leq B$ in Definitions~\ref{def:concave-convex-line-general} and \ref{def:points}. 

\begin{corollary}\label{cor:geometric}
For any concave-convex cost function $c(\cdot)$, and for any budget $B\leq B^*$, there is always an optimal solution consisting of a single DRep with effort $x_1$ such that $c(x_1)=B$, where  $B^*=c(x_{tangent}^c)$.  
\end{corollary}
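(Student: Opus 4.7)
The plan is to combine the general upper bound of Theorem~\ref{thm:psucc-ub-general} with a chord-versus-curve inequality coming from the geometry of Definition~\ref{def:points}. Write $x^*$ for the effort with $c(x^*)=B$, so that the single-DRep solution is feasible and achieves $\psucc(x^*)=1/2+x^*$. Theorem~\ref{thm:psucc-ub-general} gives $\psucc(\vec{x}) \le 1/2 + \sum_i x_i$ for every feasible profile, hence it suffices to establish the scalar inequality $\sum_i x_i \le x^*$ under the benchmark budget constraint $\sum_i c(x_i) \le B$.

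First, I would observe that the hypothesis $B\le B^* = c(x_{tangent}^c)$ together with monotonicity of $c$ gives $x^* \le x_{tangent}^c$, and that for any feasible $x_i$ we have $c(x_i)\le B = c(x^*)$, hence $x_i \le x^* \le x_{tangent}^c$. Thus all efforts lie in the interval $[0,x_{tangent}^c]$ on which the tangent-from-the-origin picture of Figure~\ref{fig:geometric-tangent} applies.

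The core step is the chord inequality $c(y) \ge (c(x^*)/x^*)\,y$ for all $y\in[0,x^*]$. Granted this, summing over $i$ gives
\begin{equation*}
  \frac{c(x^*)}{x^*}\sum_i x_i \;\le\; \sum_i c(x_i) \;\le\; B \;=\; c(x^*),
\end{equation*}
i.e.\ $\sum_i x_i \le x^*$, which is what is needed. To justify the chord inequality, I would argue that the ratio $c(y)/y$ is non-increasing on $[0,x_{tangent}^c]$: on the concave piece $[0, x_{inflection}^c]$ this is a standard consequence of concavity with $c(0)=0$; on the convex piece $[x_{inflection}^c, x_{tangent}^c]$ the ratio is convex in $y$, and the tangent-from-origin condition $c'(x_{tangent}^c)=c(x_{tangent}^c)/x_{tangent}^c$ is exactly the first-order condition $\frac{d}{dy}[c(y)/y]=0$ at $y=x_{tangent}^c$. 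Using the maximality of $x_{tangent}^c$ in Definition~\ref{def:points}, this is the global minimum of $c(y)/y$ on $[0,x_{tangent}^c]$, so the ratio is in particular $\ge c(x^*)/x^*$ for all $y\le x^*\le x_{tangent}^c$.

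The main obstacle I expect is the convex-side monotonicity of $c(y)/y$: one has to rule out an earlier local minimum in $(x_{inflection}^c, x_{tangent}^c)$, which is exactly where the ``largest tangency point'' clause in the definition of $x_{tangent}^c$ is needed. Once this geometric lemma is in hand, everything else is a two-line summation, and the corollary follows with the single-DRep solution $x=x^*$ saturating the upper bound $1/2+\sum_i x_i \le 1/2+x^*$.
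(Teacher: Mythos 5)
Your plan is correct, but it is a genuinely different route from the paper's. The paper obtains Corollary~\ref{cor:geometric} as an immediate consequence of Theorem~\ref{thm:concave-convex-lb-effort}: when $B\leq B^*=c(x_{tangent}^c)$, the chord of Definition~\ref{def:concave-convex-line-general} lies below $c$ on the entire feasible range, so $x_{int}(c,B)$ coincides with the full-budget effort $x_1$ and $\DRepsopt{c}{x_{int}}=1$; that theorem's machinery (comparison with a linear cost via Lemma~\ref{le:prob-succ-mon}) is formulated for \emph{symmetric} profiles. You instead bypass $x_{int}$ altogether and prove the scalar inequality $\sum_i x_i\leq x^*$ directly for \emph{arbitrary} profiles satisfying the benchmark constraint \eqref{eq:budget-benchmark}, and then finish with Theorem~\ref{thm:psucc-ub-general}. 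In effect you push the paper's concave argument (Corollary~\ref{cor_concave}) to its natural limit: subadditivity of $c$ is replaced by the weaker fact that the ratio $c(y)/y$ is non-increasing on $[0,x_{tangent}^c]$. What this buys: your derivation covers non-symmetric efforts with no extra work, which matches the corollary's unqualified phrasing (``there is always an optimal solution \dots''), whereas the paper's quoted derivation formally addresses the symmetric case; it is also more self-contained, not needing Lemma~\ref{le:prob-succ-mon} or the $x_{int}$ bookkeeping.

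One local repair is needed in your geometric lemma: the claim that $c(y)/y$ is \emph{convex} on the convex piece is false in general (a convex $c$ can make $c(y)/y$ decreasing-convex and then increasing-concave, hence non-convex), so you cannot argue via convexity of the ratio. The fix is the first-order argument you already hint at: the sign of $\frac{d}{dy}\left[c(y)/y\right]$ is the sign of $\phi(y):=y\,c'(y)-c(y)$. On the concave part, the tangent-line inequality at $y$ evaluated at $0$ gives $c(0)\leq c(y)-y\,c'(y)$, so $\phi(y)\leq -c(0)\leq 0$ there; on the convex part, $\phi(y_2)-\phi(y_1)\geq y_2c'(y_2)-y_1c'(y_1)-c'(y_2)(y_2-y_1)=y_1\left(c'(y_2)-c'(y_1)\right)\geq 0$ for $y_1\leq y_2$, so $\phi$ is non-decreasing and has at most one zero on $[x_{inflection}^c,\,x_{tangent}^c]$, occurring exactly at the tangency condition $y\,c'(y)=c(y)$. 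Hence $c(y)/y$ is unimodal with its minimum at $x_{tangent}^c$ and, in particular, non-increasing on $[0,x_{tangent}^c]$, which yields your chord inequality $c(y)\geq \left(c(x^*)/x^*\right)y$ for $y\leq x^*\leq x_{tangent}^c$; the ``largest tangency point'' clause of Definition~\ref{def:points} then just identifies $x_{tangent}^c$ with that unique zero within the range where $c(x)\leq B$. With this substitution, your two-line summation and the application of Theorem~\ref{thm:psucc-ub-general} complete the proof as planned.
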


Note that for the concave case the condition on the budget 
required by the previous corollary is always satisfied, and therefore Corollary~\ref{cor:geometric} generalizes Corollary~\ref{cor_concave}. 

We conclude this section by considering an arbitrarily large budget $B$ and an improved version of the result in Theorem~\ref{thm:concave-convex-lb-effort}, if the following condition holds. 

\begin{assumption}[monotonicity]\label{as:prob-succ-lin-mon}
	For any linear cost function $c_{lin}(x) = a\cdot x$ with $a>0$, the success probability is monotone decreasing with the number of DReps (while using the same corresponding maximum  effort per DRep). That is, for any $x \leq x'$, 
	$
		\psucc(x,k) \leq \psucc(x',k') , 
	$
where $k = \DRepsopt{c_{lin}}{x}$ and $k' = \DRepsopt{c_{lin}}{x'}$.
\end{assumption}

The next theorem provides better bounds for large $B$, showing that the optimal solution with equal efforts is situated in the region where $x_{tangent}^c > x_{int}(c, B)$. This result is conditioned on the above assumption, which we verified experimentally for several values of $a$ and $k$, although we have not been able to formally prove it. In Appendix~\ref{app:assumption-weaker} we provide further evidence for this assumption by proving a slightly weaker version of it. The idea  of the proof of the next theorem is shown in  Figure~\ref{fig:geometric-tangent}.

\begin{theorem}\label{thm:cocave-convex-tangent}
	Suppose Assumption~\ref{as:prob-succ-lin-mon} holds. Then, for any concave-convex cost function $c(\cdot)$, and any budget $B$, the optimal solution with identical efforts must use at least $x_{tangent}^c$ effort level and thus at most $\DRepsopt{c}{x_{tangent}^c}$ DReps. 
\end{theorem}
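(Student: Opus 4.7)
The plan is to show that any feasible symmetric profile with identical effort $x\in (0, x_{tangent}^c)$ is weakly dominated by the symmetric profile at effort $x_{tangent}^c$ using $\DRepsopt{c}{x_{tangent}^c}$ DReps; since $\psucc$ is monotone increasing in $k$ by Lemma~\ref{le:prob-succ-mon}, it suffices to compare profiles that pack the maximum admissible number of DReps at each effort level. The key idea is to compare the original cost $c$ with the tangent linear cost $c_{lin}$ from Definition~\ref{def:points}, i.e., $c_{lin}(x)=\alpha x$ with $\alpha = c(x_{tangent}^c)/x_{tangent}^c$, which by construction satisfies $c_{lin}(x)\leq c(x)$ throughout the feasible region $\{x: c(x)\leq B\}$, with equality at $x_{tangent}^c$. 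This is the geometric content of Figure~\ref{fig:geometric-tangent}.

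The core of the argument is a short chain of three inequalities, which I would establish for every feasible $x\leq x_{tangent}^c$. First, from $c_{lin}(x)\leq c(x)$ we get $\DRepsopt{c}{x} = \lfloor B/c(x)\rfloor \leq \lfloor B/c_{lin}(x)\rfloor = \DRepsopt{c_{lin}}{x}$, and the monotonicity of $\psucc$ in the number of DReps (Lemma~\ref{le:prob-succ-mon}) then yields
\[
\psucc(x, \DRepsopt{c}{x}) \leq \psucc(x, \DRepsopt{c_{lin}}{x}).
\]
Second, applying Assumption~\ref{as:prob-succ-lin-mon} to the linear cost $c_{lin}$ and the pair $x\leq x_{tangent}^c$ yields
\[
\psucc(x, \DRepsopt{c_{lin}}{x}) \leq \psucc(x_{tangent}^c, \DRepsopt{c_{lin}}{x_{tangent}^c}).
\]
Third, at $x_{tangent}^c$ the two cost functions coincide, so $\DRepsopt{c_{lin}}{x_{tangent}^c} = \DRepsopt{c}{x_{tangent}^c}$. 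Concatenating the three steps delivers $\psucc(x, \DRepsopt{c}{x}) \leq \psucc(x_{tangent}^c, \DRepsopt{c}{x_{tangent}^c})$, which is the desired claim, since effort levels $x\geq x_{tangent}^c$ need no separate treatment.

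The main subtlety is that Assumption~\ref{as:prob-succ-lin-mon} must be invoked on the specific tangent line from Definition~\ref{def:points}, not on an arbitrary linear cost: $c_{lin}$ is the unique linear cost that simultaneously (i) stays weakly below $c$ on the entire feasible region, which is what makes the $k_{\max}$ comparison go in the right direction, and (ii) agrees with $c$ at $x_{tangent}^c$, which is what lets the last step of the chain be rewritten back in terms of the original cost $c$. Once this pairing is fixed, the remainder of the proof is essentially geometric bookkeeping on Figure~\ref{fig:geometric-tangent}.
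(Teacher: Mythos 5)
Your proof is correct and follows essentially the same route as the paper's own argument: the identical three-step chain comparing $c$ to the tangent line $c_{lin}$ via $\DRepsopt{c}{x}\leq \DRepsopt{c_{lin}}{x}$ and Lemma~\ref{le:prob-succ-mon}, invoking Assumption~\ref{as:prob-succ-lin-mon} on $c_{lin}$ for $x\leq x_{tangent}^c$, and closing the chain using $\DRepsopt{c}{x_{tangent}^c}=\DRepsopt{c_{lin}}{x_{tangent}^c}$. Your added remark on why the tangent line is the right linear cost to pair with Assumption~\ref{as:prob-succ-lin-mon} is a helpful clarification but does not change the argument.
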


\subsection{Convex case: The budget matters for the optimal number of DReps}\label{sec:convex}

One might conjecture that the convex case behaves inversely to the concave case, implying that many DReps are superior to fewer. However, we demonstrate that this presumption does not hold true, revealing a more intricate scenario (shown in Theorem~\ref{th_three-vs-one} below). In particular, it holds that:
\begin{enumerate}
    \item For certain convex costs, the superiority of three DReps over one hinges on the budget $B$.
    \item The resolution of the former inquiry is contingent upon the specific convex function, even within  costs of the form $c(x)=x^\beta$, parameterized in  $\beta>1$.
\end{enumerate}
This indicates that a straightforward ``monotonicity" argument asserting the supremacy of larger committees does not universally apply in the convex case.

The next lemma allows us to compare $k=3$ with $k=1$  on general cost functions, and it will also be used below in Section~\ref{sec:S-shape} for a specific class of concave-convex cost functions. 

\begin{lemma}\label{le:three-vs-one}
Given any cost function $c(x)$ and any budget $B£$,  three DReps are better than one DRep (w.r.t. the success probability) if and only if the corresponding optimal efforts, $x^*(1)$ and $x^*(3)$, satisfy
\begin{align}\label{eq:three-vs-one}
   x^*(1)<\frac{3x^*(3) - 4 x^*(3)^3}{2}  \ . 
\end{align}
\end{lemma}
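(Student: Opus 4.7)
The plan is to reduce the comparison to two direct computations of $\psucc$ and then match algebraically with the stated threshold.

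First I would compute $\psucc$ for the one-DRep case. Here the single DRep has weight $w_1 = 1$, so the event $\sum_i w_i X_i > 1/2$ in \eqref{eq:score} is exactly the event that this DRep votes correctly. Hence $\psucc(x,1) = 1/2 + x$, which is strictly increasing in $x$. The optimum under the budget constraint is therefore attained at $x^*(1)$ with $c(x^*(1)) = B$, giving $\psucc(x^*(1),1) = 1/2 + x^*(1)$. This also follows from Theorem~\ref{thm:psucc-ub-general} specialized to $n=1$, where the inequality is tight.

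Next I would compute $\psucc$ for three symmetric DReps, each with weight $1/3$ and correctness probability $p = 1/2 + x$. Since all weights equal $1/3$, the weighted majority rule reduces to ``at least two out of three vote correctly'' (and ties are impossible under equal weights), so
$$\psucc(x,3) \;=\; \binom{3}{2} p^2 (1-p) + p^3 \;=\; 3p^2 - 2p^3.$$
Substituting $p = 1/2 + x$ and expanding yields $\psucc(x,3) = 1/2 + (3x - 4x^3)/2$, which is monotone increasing on $[0,1/2]$ (its derivative is $3(1-4x^2)/2 \geq 0$ on this interval). Hence the symmetric optimum under budget is attained at $x^*(3)$ with $3c(x^*(3)) = B$, and has value $\psucc(x^*(3),3) = 1/2 + (3x^*(3) - 4x^*(3)^3)/2$.

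Finally I would simply compare the two optima: three DReps beat one DRep if and only if
$$\frac{1}{2} + \frac{3x^*(3) - 4 x^*(3)^3}{2} \;>\; \frac{1}{2} + x^*(1),$$
which, after subtracting $1/2$ from both sides, is exactly \eqref{eq:three-vs-one}. The only step requiring any care is the algebraic expansion of $3p^2 - 2p^3$ at $p = 1/2 + x$, but this is routine; there is no real conceptual obstacle, since the lemma is essentially a repackaging of this identity together with the two monotonicity observations that pin down the optimal efforts.
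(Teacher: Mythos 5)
Your proof is correct and follows essentially the same route as the paper: both arguments reduce the comparison to the explicit formulas $\psucc(x^*(1)) = 1/2 + x^*(1)$ and $\psucc(x,3) = 1/2 + (3x - 4x^3)/2$ for symmetric efforts (the paper computes the latter via the failure probability $(1-p)^3 + 3p(1-p)^2$, which is algebraically identical to your $3p^2 - 2p^3$). Your added justifications --- tie-impossibility with equal weights, monotonicity of $\psucc(x,3)$ in $x$, and the budget-exhaustion characterization of $x^*(1)$ and $x^*(3)$ --- are all sound and only make explicit what the paper leaves implicit.
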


The above lemma yields the following result for convex costs. We note that the result is analogous to the one obtained in \cite{gersbach2024we}, which also considers convex costs though in a slightly different setting. 

\begin{theorem}\label{th_three-vs-one}
    For the family of convex cost functions $c(x) = x^\beta$, with $\beta>1$, and for $\beta^\star := \frac{\ln(3)}{\ln(3)-\ln(2)}\approx 2.7095$, the following holds: 
    \begin{enumerate}
        \item For $\beta < \beta^\star$, one DRep is always better than three DReps, regardless of the budget $B$.
        \item For $\beta > \beta^\star$, three DReps with equal effort are better than a single DRep if and only if the budget $B$ is at most $B^\star=3^{3/2} \cdot \left(\frac{3^{1 - 1/\beta}  - 2}{4}\right)^{\beta/2}$.
    \end{enumerate}
\end{theorem}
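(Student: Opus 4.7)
The plan is to apply Lemma~\ref{le:three-vs-one} directly and reduce the problem to a tractable inequality in $B$ and $\beta$. Since $\psucc(x,k)$ is monotone increasing in $x$ (Lemma~\ref{le:prob-succ-mon}) and the cost $c(x)=x^\beta$ is strictly increasing, the optimal symmetric solutions with $k$ DReps saturate the budget \eqref{eq:budget-benchmark}, giving $k\cdot c(x^*(k))=B$. Hence
\[
x^*(1)=B^{1/\beta}, \qquad x^*(3)=\left(B/3\right)^{1/\beta}.
\]
(We first observe that if $B\geq (1/2)^\beta$, then $x^*(1)=1/2$ and $\psucc(1)=1$, so three DReps cannot strictly beat one; we may therefore restrict to $B<(1/2)^\beta$, where neither $x^*(1)$ nor $x^*(3)$ hits the cap $1/2$.)

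Plugging the expressions into the condition of Lemma~\ref{le:three-vs-one}, three DReps beat one if and only if
\[
2B^{1/\beta} < 3\cdot 3^{-1/\beta} B^{1/\beta} - 4\cdot 3^{-3/\beta} B^{3/\beta}
   = 3^{1-1/\beta} B^{1/\beta} - 4\cdot 3^{-3/\beta} B^{3/\beta}.
\]
Dividing by the positive quantity $B^{1/\beta}$ and rearranging yields the clean equivalent inequality
\[
4\cdot 3^{-3/\beta}\, B^{2/\beta} \;<\; 3^{1-1/\beta} - 2.
\]
All of the work is now algebraic: the sign of the right-hand side controls everything.

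I split on $\beta$ using the identity $3^{1-1/\beta}-2=0 \iff \beta=\ln 3/(\ln 3-\ln 2)=\beta^\star$. For $\beta<\beta^\star$ the right-hand side is non-positive while the left-hand side is strictly positive, so the inequality fails for every $B>0$, yielding part~1. For $\beta>\beta^\star$ the right-hand side is strictly positive, and solving for $B$ gives
\[
B \;<\; \left(\frac{3^{1-1/\beta}-2}{4}\right)^{\beta/2}\cdot 3^{3/2} \;=\; B^\star,
\]
which is exactly the threshold in part~2.

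The only conceptual subtlety is the cap $x\leq 1/2$: one must verify that the purely algebraic threshold $B^\star$ never exceeds the regime $B<(1/2)^\beta$ where the unconstrained formulas for $x^*(1),x^*(3)$ are valid. This follows from the equivalent inequality $3^{1-1/\beta}-3^{-3/\beta}<2$, which can be checked by noting the left-hand side equals $2$ at $\beta=\beta^\star$ and at $\beta\to\infty$ and is strictly below $2$ in between (a short monotonicity check on the function $\beta\mapsto 3^{1-1/\beta}-3^{-3/\beta}$ on $(\beta^\star,\infty)$). I expect this boundary verification to be the only non-mechanical step; the rest is direct substitution into Lemma~\ref{le:three-vs-one}.
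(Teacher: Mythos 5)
Your proposal follows essentially the same route as the paper's proof: use budget saturation to get $x^*(1)=B^{1/\beta}$ and $x^*(3)=3^{-1/\beta}B^{1/\beta}$, substitute into condition \eqref{eq:three-vs-one} of Lemma~\ref{le:three-vs-one}, reduce to $4\cdot 3^{-3/\beta}B^{2/\beta} < 3^{1-1/\beta}-2$, and split on the sign of the right-hand side at $\beta^\star$; the algebra and the resulting threshold $B^\star$ match the paper exactly. The one genuine addition is your handling of the cap $x\le 1/2$, which the paper glosses over: you correctly note that for $B\geq (1/2)^\beta$ a single DRep attains $\psucc=1$, and that the ``if and only if'' in part~2 therefore requires $B^\star<(1/2)^\beta$. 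However, your sketch of that boundary verification contains a numerical slip: since $3^{1/\beta^\star}=3/2$, at $\beta=\beta^\star$ one has $3^{1-1/\beta^\star}-3^{-3/\beta^\star}=2-8/27=46/27$, not $2$, so your claimed endpoint behavior (``equals $2$ at $\beta^\star$ and at $\beta\to\infty$'') is wrong at the left endpoint. Fortunately the inequality you need, $3^{1-1/\beta}-3^{-3/\beta}<2$, holds for \emph{all} finite $\beta>0$ and admits a one-line proof: setting $t=3^{-1/\beta}\in(0,1)$ gives
\begin{align*}
3^{1-1/\beta}-3^{-3/\beta}-2 \;=\; 3t-t^3-2 \;=\; -(t-1)^2(t+2) \;<\; 0\ .
\end{align*}
With that substitution your boundary check is complete, and your proof is correct --- indeed slightly more careful than the paper's on the feasibility of the unconstrained optimal efforts.
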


We next provide a numerical example for a simple (low degree) convex cost function. 

\begin{example}[budget-dependence for convex functions]
    Consider the convex cost function $c(x) = x^4$. By applying Theorem \ref{th_three-vs-one}, we have that three DReps with equal effort are better than a single DRep if and only if the budget $B$ is at most $B^\star =  \left(\frac{3^\frac{3}{2}-2{\cdot}3^\frac{3}{4}}{4}\right)^2\approx 0.0254$. Moreover, the largest benefit for three DReps against a single DRep is for a budget equal to  $B^{\star \star} = \left(\frac{4\sqrt[4]{3}\,x^2-3^\frac{3}{4}+2}{2}\right)^4$.
\end{example}

Overall, the question of determining the optimal committee size with convex functions seems quite challenging. Even generalizing Theorem \ref{th_three-vs-one}, to understand e.g. if $k''$ DReps are better than $k'$ DReps for $k'' > k' >3$, leads to a more complex analysis, since one needs to account for all the different probability events that can lead to the correct outcome. We leave this as an interesting open problem for future work.

\section{Application to S-shaped learning curves}\label{sec:S-shape}
\renewcommand{\sigma}{\xi}

In this section, we focus on certain classes of concave-convex costs which are derived by some \emph{S-shaped learning curves}  considered in the literature of experimental psychology
\cite{murre2014s}. Specifically, \cite{murre2014s} proposes the following family of exponential learning curves, for learning over time a new task or new material (in our case, learning the correct outcome),
\begin{align}
    \label{eq:learning-S-shape-exp}
    p_\xi(t) = [1 - \exp(-\mu t)]^\xi
\end{align}
where the above success probability depends on the following parameters: 
\begin{enumerate}
\item $t$ is the time spent on learning (cost);
    \item $\mu$ is the learning rate (potentially different for each individual);
    \item $\xi$ is a complexity parameter (equal for all individuals).
\end{enumerate}

The complexity parameter $\xi>1$ is crucial in order to obtain S-shaped learning curves, which have been often observed in practice. As pointed out by \cite{murre2014s}, several prior theoretical models \cite{Estes1950-ESTTAS-2,hull1943principles,rescorla1972theory} assume a ``vanilla'' exponential function with $\xi=1$, i.e.,  
$p(t) = 1 - \exp(-\mu t)$. The major shortcoming of these theoretical models is the mismatch with the S-shape observed by experimentalists. Intuitively speaking, the refined model in \cite{murre2014s} assumes that $\xi=1$ corresponds to acquire some ``elementary skills'', while the actual tasks to be solved require some complex \emph{combination} of these skills. The latter is captured by a parameter $\xi>1$.   

\begin{remark}[homogeneous $\mu$]\label{rem:exp-learning-method}
    Though the above model starts with the assumption of a different learning rate per individual, the actual experiments in \cite{murre2014s} are conducted by isolating groups of people with \emph{similar learning rate} and then fit the data of the group in order to estimate the complexity parameter $\xi$ (the same $\xi$ is used across all groups). In the following, we shall consider  $\mu$ as homogeneous across all individuals, which is in line with the methodology in  \cite{murre2014s} just described. 
\end{remark}

We next derive a concave-convex cost function $c(x)$ which corresponds to the above families of functions \eqref{eq:learning-S-shape-exp}. 
In our notation,  $t = c(x)$. Moreover, we consider $x=p_\xi(t)/2$ so that the  success probability for our two-outcomes  setting, $1/2 + x = 1/2 + p_\xi(t)/2$,  attains its minimum of $1/2$ when the cost for the corresponding player is zero. We thus have this relation: 
$2x = p_\xi(t) = [1 - \exp(-\mu c(x))]^\xi$, 
which is equivalent to
$\exp(-\mu c(x)) =  1 - (2x)^{1/\xi}\ . $
The latter identity leads to the next definition for $c(x)$.

\begin{definition}\label{def:exp-learning-costs}
   The  cost function associated to the exponential-learning curve  
   \eqref{eq:learning-S-shape-exp}  with complexity $\xi$ and learning rate $\mu$ is defined as 
\begin{align}
    \label{eq:cost-S-shaped-exp}
    c_{\xi}(x) :=  -\frac{1}{\mu}\ln \left(1 - (2x)^{1/\xi}\right)\ . 
\end{align}
\end{definition}

\begin{figure}
    \centering
    \includegraphics[scale=.4]{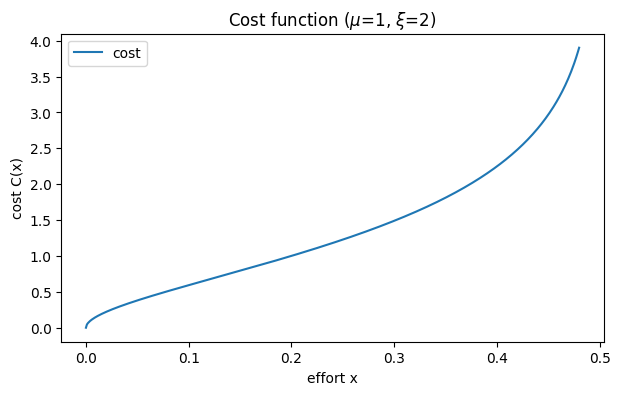}
    \caption{An example of a concave-convex cost function from exponential learning (Definition~\ref{def:exp-learning-costs}).}
    \label{fig:exp-learning-2}
\end{figure}

Figure~\ref{fig:exp-learning-2} shows an example of these concave-convex cost functions. 
The next lemma provides useful features of these cost functions.

\begin{lemma}\label{le:inflection-S-shape}
For every complexity parameter $\xi>1$, 
    the cost function $c_{\xi}$ in \eqref{eq:cost-S-shaped-exp} is a concave-convex function whose inflection point does not depend on the (individual) learning parameter $\mu$ but only on $\xi$. In particular, the inflection point is $\Tilde{x}= \frac{1}{2}(1 - 1/\xi)^\xi$ and the corresponding cost is $c_{\xi}(\Tilde{x})= \ln(\xi)/\mu$.
\end{lemma}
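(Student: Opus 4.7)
The plan is to show that $c_\xi''$ has a unique sign change on the relevant domain and compute its location explicitly. I would first separate out the dependence on $\mu$: writing $c_\xi(x) = \tfrac{1}{\mu}\,h_\xi(x)$ with $h_\xi(x) := -\ln\bigl(1-(2x)^{1/\xi}\bigr)$, the sign of $c_\xi''(x)$ agrees with the sign of $h_\xi''(x)$ because $\mu>0$. This observation alone gives the claim that the location of the inflection point cannot depend on $\mu$; only the value $c_\xi(\tilde x)$ can.

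Next I would introduce the substitution $y = (2x)^{1/\xi}\in[0,1)$, so that $x = y^\xi/2$, $dx/dy = \tfrac{\xi}{2}\,y^{\xi-1}$, and $c_\xi = -\tfrac{1}{\mu}\ln(1-y)$. A single application of the chain rule yields
\[
c_\xi'(x) \;=\; \frac{2}{\mu\,\xi\, y^{\xi-1}(1-y)}.
\]
Setting $g(y) := y^{\xi-1}(1-y)$ and differentiating a second time through $y$, the sign of $c_\xi''(x)$ is \emph{opposite} to that of $g'(y)$, since the remaining factor is strictly positive for $y\in(0,1)$. A direct computation gives
\[
g'(y) \;=\; y^{\xi-2}\bigl[(\xi-1) - \xi y\bigr],
\]
which is positive for $y<1-1/\xi$, vanishes at $y^\star = 1-1/\xi$, and is negative for $y\in(1-1/\xi,1)$.

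From here the conclusion is immediate: $c_\xi$ is strictly concave on $\{x: y(x)<y^\star\}$ and strictly convex on $\{x: y(x)>y^\star\}$, with the unique inflection point at
\[
\tilde{x} \;=\; \frac{(y^\star)^\xi}{2} \;=\; \tfrac{1}{2}\bigl(1-1/\xi\bigr)^\xi,
\]
as required. Substituting back into \eqref{eq:cost-S-shaped-exp}, $c_\xi(\tilde x) = -\tfrac{1}{\mu}\ln(1-y^\star) = -\tfrac{1}{\mu}\ln(1/\xi) = \ln(\xi)/\mu$.

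There is no deep obstacle here; the proof is essentially a careful chain-rule calculation. The mildly delicate bookkeeping is to ensure $y\in(0,1)$ on the relevant domain so that the positive prefactors can be ignored when reading off signs, and to check that the hypothesis $\xi>1$ guarantees both $y^\star=1-1/\xi\in(0,1)$ and $y^{\xi-2}>0$ for $y>0$ even when $\xi<2$. Once these are in place the concave-to-convex transition, the $\mu$-independence of $\tilde x$, and the value of $c_\xi(\tilde x)$ all follow.
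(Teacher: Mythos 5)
Your proof is correct, but it takes a genuinely different route from the paper. The paper does not differentiate $c_\xi$ at all: it cites \cite{murre2014s} for two facts about the learning curve $p_\xi(t)=[1-\exp(-\mu t)]^\xi$ -- that it is strictly increasing and convex-concave, and that its inflection point sits at $\tilde t=\ln(\xi)/\mu$ -- and then argues structurally that the inverse of an increasing convex-concave function is concave-convex, so $c_\xi(x)=p_\xi^{-1}(2x)$ inherits the concave-convex shape with inflection point $2\tilde x = p_\xi(\tilde t)=(1-1/\xi)^\xi$ and cost $c_\xi(\tilde x)=\tilde t=\ln(\xi)/\mu$. You instead verify everything from scratch: factoring out $1/\mu$ to get $\mu$-independence of the inflection location, substituting $y=(2x)^{1/\xi}$ to obtain $c_\xi'(x)=\frac{2}{\mu\xi\,y^{\xi-1}(1-y)}$, and reading the sign of $c_\xi''$ off $g'(y)=y^{\xi-2}\bigl[(\xi-1)-\xi y\bigr]$ (your sign bookkeeping is right, since $dy/dx>0$ and $g(y)^2>0$ on $y\in(0,1)$, and the bracket is strictly decreasing, so the sign change at $y^\star=1-1/\xi$ is unique). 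What each approach buys: the paper's argument is shorter and transparently tied to the psychology literature it is modeling, but it leans on the cited convex-concavity of $p_\xi$ and on the (unproved in the paper) fact that inversion swaps convex-concave to concave-convex; your computation is self-contained, needs no external citation, yields strict concavity/convexity on each side of $\tilde x$, and gives an explicit derivative formula that could be reused elsewhere. Your attention to the $\xi\in(1,2)$ case ($y^{\xi-2}>0$ for $y>0$) and to $y^\star\in(0,1)$, hence $\tilde x\in(0,1/2)$, covers the only delicate points.
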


We next quantify the optimal symmetric efforts for this family of functions.

\begin{lemma}\label{le:opt-efforts-S-shape}
    For the cost function $c_{\xi}$  in \eqref{eq:cost-S-shaped-exp} the optimal symmetric effort $x^*(k)$, given a budget $B$, is equal to 
    $
        x^*(k) = \frac{p_\xi(B/k)}{2} = \frac{[1-\exp(-B\mu/k)]^\xi}{2} \ . 
    $
\end{lemma}

The next result concerns the optimality of one DRep when the budget is ``not too high''. Note that the result applies to a range of values for the budget, for which the largest feasible effort can be bigger than the inflection point, and thus we are still effectively considering a concave-convex cost function. 
\begin{corollary}\label{cor:S-shape-one-vs-three}
    For the cost function $c_{\xi}$  in \eqref{eq:cost-S-shaped-exp} one DRep is optimal for any budget $B \leq B^*$, where  $B^*$ is defined as in Corollary~\ref{cor:geometric} and it satisfies $B^*>\ln(\xi)/\mu$. In this case, the optimal probability of success is equal to $1/2 + p_{\xi}(B)/2 = 1/2 + \frac{[1-\exp(-B\mu)]^\xi}{2}$.  
\end{corollary}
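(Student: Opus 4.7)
The plan is to derive this corollary from Corollary~\ref{cor:geometric}, combined with the two structural lemmas \ref{le:inflection-S-shape} and \ref{le:opt-efforts-S-shape} for the exponential-learning family. The only non-routine step will be verifying that the threshold $B^*$ produced by Corollary~\ref{cor:geometric} strictly exceeds $\ln(\xi)/\mu$; once that geometric fact is in place, the rest is algebra.

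First, I will note that Lemma~\ref{le:inflection-S-shape} already certifies that $c_{\xi}$ is concave-convex, with inflection point $\tilde x = \tfrac{1}{2}(1-1/\xi)^{\xi}$ and cost value $c_{\xi}(\tilde x) = \ln(\xi)/\mu$. This makes Corollary~\ref{cor:geometric} directly applicable.

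Second, and this is the heart of the argument, I will show $B^{*} = c_{\xi}(x_{tangent}^{c_{\xi}}) > \ln(\xi)/\mu$. Since $c_{\xi}$ is strictly increasing, it suffices to prove $x_{tangent}^{c_{\xi}} > \tilde x$, i.e., the tangent from the origin meets the curve strictly inside the convex region. The plan is to rule out every candidate $x_0 \in (0,\tilde x]$ via the following observation: on this interval $c_{\xi}$ is strictly concave with $c_{\xi}(0)=0$, and a standard consequence (tangent-above-chord at the origin gives $c'_{\xi}(x)\,x \leq c_{\xi}(x)$) is that the slope-from-origin map $x \mapsto c_{\xi}(x)/x$ is strictly decreasing on $(0,\tilde x]$. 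Hence for any $x_0 \in (0,\tilde x]$ and any $x \in (x_0,\tilde x]$ one has $c_{\xi}(x) < (c_{\xi}(x_0)/x_0)\cdot x$, so the line from the origin through $(x_0, c_{\xi}(x_0))$ rises above the cost curve immediately past $x_0$, violating the defining property of $x_{tangent}^{c_{\xi}}$ in Definition~\ref{def:points}. Thus $x_{tangent}^{c_{\xi}}$ lies in the convex region, and monotonicity of $c_\xi$ yields $B^{*} > c_\xi(\tilde x) = \ln(\xi)/\mu$.

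Third, with $B \leq B^{*}$ now known to cover a range strictly containing $(0,\ln(\xi)/\mu]$, Corollary~\ref{cor:geometric} yields that the optimum is attained by a single DRep spending the whole budget, i.e.\ exerting effort $x_1$ with $c_{\xi}(x_1)=B$. Inverting Definition~\ref{def:exp-learning-costs} (or equivalently specializing Lemma~\ref{le:opt-efforts-S-shape} to $k=1$) gives $x_1 = [1-\exp(-B\mu)]^{\xi}/2 = p_{\xi}(B)/2$. Since a lone DRep carries weight $w_1 = 1$, the success probability reduces to $\psucc = 1/2 + x_1 = 1/2 + p_{\xi}(B)/2$, as claimed. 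The main obstacle is the second step, where ruling out tangent points in the concave region requires combining strict concavity of $c_{\xi}$ on $(0,\tilde x]$ with the boundary condition $c_{\xi}(0)=0$; the remaining steps are direct appeals to earlier results or elementary inversions.
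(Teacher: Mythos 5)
Your proof is correct and follows essentially the same route as the paper's: invoke Corollary~\ref{cor:geometric}, use Lemma~\ref{le:inflection-S-shape} to get $c_{\xi}(\Tilde{x})=\ln(\xi)/\mu$ and deduce $B^{*}=c_{\xi}(x_{tangent}^{c_{\xi}})>\ln(\xi)/\mu$ from $x_{tangent}^{c_{\xi}}>\Tilde{x}$, then read off the success probability from Lemma~\ref{le:opt-efforts-S-shape} with $k=1$. You actually do more than the paper, which asserts $x_{tangent}^{c}>x_{inflection}^{c}$ without proof, by establishing it via the strictly decreasing slope-from-origin map on the concave part; the only point to patch is that your interval argument is vacuous at the endpoint $x_{0}=\Tilde{x}$, where instead the strict form of your tangent-above-chord inequality, $c_{\xi}'(\Tilde{x})\,\Tilde{x}<c_{\xi}(\Tilde{x})$, together with continuity of $c_{\xi}'$ shows the curve falls below the ray just past $\Tilde{x}$, ruling out that candidate as well and preserving the strict inequality $B^{*}>\ln(\xi)/\mu$.
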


Similarly to the convex case in Section~\ref{sec:convex}, one DRep \emph{is not} always optimal, and this depends on the budget $B$. In particular,  Lemma~\ref{le:three-vs-one} leads to the following observation.

\begin{observation}
    There exists a value $B_{three}$ such that, for any budget $B\geq B_{three}$ three  DReps are better than one DRep. This is because, for sufficiently large $B$, we are in the region where the curve gets sufficiently steep (see Figure~\ref{fig:exp-learning-2}), and  condition \eqref{eq:three-vs-one} of Lemma~\ref{le:three-vs-one} is satisfied for the values given by Lemma~\ref{le:opt-efforts-S-shape}. For example, this happens at $B_{three}\approx 4.35$ for the curve in Figure~\ref{fig:exp-learning-2} ($\mu=1$ and $\xi=2$).  
\end{observation}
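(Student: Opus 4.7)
The plan is to translate the observation into a scalar inequality in $B$, then argue about its sign. By Lemma~\ref{le:opt-efforts-S-shape}, $x^*(k) = \tfrac{1}{2}[1 - e^{-B\mu/k}]^{\xi}$, and by Lemma~\ref{le:three-vs-one}, three DReps beat one DRep exactly when
\[
h(B) \;:=\; \frac{3\,x^*(3) - 4\,x^*(3)^{3}}{2} \;-\; x^*(1) \;>\; 0.
\]
So the first task is to show that $h(B) > 0$ for all sufficiently large $B$; $B_{three}$ is then defined as the largest root of $h$, and continuity together with monotonicity of $x^*(k)$ in $B$ handles everything above that.

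The natural reparametrization is $u := e^{-B\mu/3} \in (0,1)$, under which $2x^*(3) = (1-u)^{\xi}$ and $2x^*(1) = (1-u^3)^{\xi} = (1-u)^{\xi}(1+u+u^2)^{\xi}$, using the factorization $1-u^3 = (1-u)(1+u+u^2)$. Multiplying $h(B) > 0$ by $4/(1-u)^\xi$ gives the equivalent form
\[
F(u) \;:=\; 3 - (1-u)^{2\xi} - 2\,(1+u+u^{2})^{\xi} \;>\; 0.
\]
Since $B \to \infty$ corresponds to $u \to 0$, the core technical step is a Taylor expansion of $F$ at $u = 0$. A direct computation shows $F(0) = 0$ and $F'(0) = 0$, so the sign of $F$ near zero is controlled by $F''(0)$ (and, if that vanishes too, by higher-order terms). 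Carrying this expansion out carefully, while tracking the interplay between the two exponents $2\xi$ and $\xi$, pins down on which side of zero $F$ sits for small $u$, and hence determines the large-$B$ behavior of $h$.

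The main obstacle is precisely this sign analysis: the cancellations at $u = 0$ in the Taylor expansion are delicate, and the relative magnitudes of the $(1-u)^{2\xi}$ and $(1+u+u^2)^{\xi}$ terms conspire so that low-order coefficients match. Once the local sign of $F$ near $u = 0$ is settled, the rest of the argument is routine: pick $B_{three}$ as the largest root of $h$, which exists and is finite by continuity and the intermediate-value theorem, and verify that $h$ stays on the favorable side of zero beyond $B_{three}$ using monotonicity of $x^*(k)$ in $B$ combined with the concavity of $x \mapsto 3x - 4x^3$ on $[0, 1/2]$. The stated numerical value $B_{three} \approx 4.35$ for $\mu = 1, \xi = 2$ can then be obtained by solving $F(u) = 0$ numerically for the corresponding value of $u$.
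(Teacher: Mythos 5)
Your reduction is correct, and it is in fact the faithful formalization of the paper's justification (the paper offers no proof of this Observation beyond the informal ``steepness'' remark and the numeric value): with $u=e^{-B\mu/3}$, Lemma~\ref{le:opt-efforts-S-shape} gives $2x^*(3)=(1-u)^{\xi}$ and $2x^*(1)=(1-u^3)^{\xi}$, and condition \eqref{eq:three-vs-one} of Lemma~\ref{le:three-vs-one} is equivalent, after dividing by $(1-u)^{\xi}$, to $F(u):=3-(1-u)^{2\xi}-2(1+u+u^2)^{\xi}>0$. The fatal problem is precisely the step you deferred: the sign analysis comes out on the \emph{wrong} side. Indeed $F(0)=F'(0)=0$, but $F''(0)=-2\xi(2\xi-1)-2\xi(\xi+1)=-6\xi^2<0$, so $F(u)=-3\xi^2u^2+O(u^3)<0$ for small $u>0$; that is, for all sufficiently large $B$ one DRep is strictly \emph{better} than three, the opposite of what you set out to prove. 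Moreover the negativity is global, not merely local: Bernoulli's inequality gives $(1-u)^{2\xi}\geq 1-2\xi u$ and $(1+u+u^2)^{\xi}\geq 1+\xi u+\xi u^2$, hence $F(u)\leq -2\xi u^2<0$ for every $u\in(0,1)$ and every $\xi\geq 1$; for the paper's own example $\xi=2$ one computes exactly $F(u)=-12u^2-3u^4$. Consequently $h$ has no positive values at all, so there is no ``largest root'' $B_{three}$, and your concluding continuity/monotonicity step has nothing to act on. A direct numerical check confirms this even at the paper's claimed threshold: for $\mu=1$, $\xi=2$, $B=4.35$ one gets $x^*(1)\approx 0.487$ while $\bigl(3x^*(3)-4x^*(3)^3\bigr)/2\approx 0.389$, so \eqref{eq:three-vs-one} fails there as well.

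The gap is therefore not a patchable technicality in your write-up: carried to completion, your own program disproves the statement rather than proving it. The intuition is that for this exponential-learning family the single DRep's failure probability is $1/2-x^*(1)=\Theta(e^{-\mu B})$, whereas three DReps fail with probability $\Theta(e^{-2\mu B/3})$ (two of three must err), so one DRep wins asymptotically in $B$ --- and this persists even if the three efforts are allowed to be asymmetric. The inconsistency you would have collided with lies in the source statement itself: the Observation's claim and the value $B_{three}\approx 4.35$ do not follow from Lemmas~\ref{le:three-vs-one} and \ref{le:opt-efforts-S-shape} as stated. Two smaller remarks: even had the local sign been favorable, defining $B_{three}$ as ``the largest root of $h$'' presupposes eventual single-crossing, which monotonicity of $x^*(k)$ in $B$ together with concavity of $x\mapsto(3x-4x^3)/2$ does not by itself deliver; and your claim that the stated numeric value ``can then be obtained by solving $F(u)=0$'' is vacuous here, since $F$ has no root in $(0,1)$.
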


We conclude this section, by  considering the equilibrium conditions for the Threshold$(k)$ mechanism. In particular, Theorem~\ref{thm:top-k} implies the following result. 

\begin{theorem}\label{th:S-shape-euilibria}
    For the cost function $c_{\xi}$  in \eqref{eq:cost-S-shaped-exp}, and for budget $B$, every symmetric equilibrium of the Threshold$(k)$ mechanism has exactly $k$ DReps with positive effort $x$, and  $x$ satisfies the following condition:
    \begin{itemize}
        \item either $\frac{k-1}{k} \cdot x^*(k) \leq  x \leq  x^*(k)$ and $\frac{k}{k-1}x \in (0,1/2]$,
        \item or $x \leq  x^*(k)$ and $\frac{k}{k-1}x >1/2$. 
    \end{itemize}
Moreover,  $x^*(k) = \frac{p_\xi(B/k)}{2} = \frac{[1-\exp(-B\mu/k)]^\xi}{2}$ is the optimal effort for symmetric solutions with $k$ DReps and budget $B$.
\end{theorem}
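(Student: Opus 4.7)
The plan is to obtain this result as a direct specialization of Theorem~\ref{thm:top-k} to the specific cost function $c_\xi$ given in \eqref{eq:cost-S-shaped-exp}. I would begin by invoking Theorem~\ref{thm:top-k}, which already asserts that every pure Nash equilibrium is symmetric, consists of exactly $k$ DReps exerting identical positive effort $x$, and that $x$ must satisfy one of two inequality systems involving $c_\xi(x)$, $c_\xi(\tfrac{k}{k-1}x)$, and the threshold $B/k$. So the content left to prove is precisely the translation of those two conditions into conditions phrased in terms of $x^*(k)$.

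The key step is to observe that $c_\xi$ is strictly increasing and continuous on its domain. This is clear from \eqref{eq:cost-S-shaped-exp}, since $(2x)^{1/\xi}$ is strictly increasing in $x$, so $1 - (2x)^{1/\xi}$ is strictly decreasing and positive on the feasible range, and hence $-\tfrac{1}{\mu}\ln(\cdot)$ applied to it is strictly increasing. Consequently, the equation $c_\xi(z) = B/k$ admits a unique solution, and a direct algebraic inversion of \eqref{eq:cost-S-shaped-exp}, $-\tfrac{1}{\mu}\ln(1 - (2z)^{1/\xi}) = B/k \iff 1 - (2z)^{1/\xi} = e^{-B\mu/k} \iff z = \tfrac{[1-\exp(-B\mu/k)]^\xi}{2}$, shows that this unique solution is exactly $x^*(k)$ as given by Lemma~\ref{le:opt-efforts-S-shape}. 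Monotonicity then yields the two equivalences $c_\xi(x) \leq B/k \iff x \leq x^*(k)$ and $c_\xi(\tfrac{k}{k-1}x) \geq B/k \iff \tfrac{k}{k-1}x \geq x^*(k) \iff x \geq \tfrac{k-1}{k}\,x^*(k)$.

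Substituting these two equivalences into the two cases of Theorem~\ref{thm:top-k} produces precisely the two cases claimed in the present theorem. The closing ``Moreover'' statement identifying $x^*(k)$ with the optimal symmetric effort under budget $B$ and exactly $k$ DReps is nothing more than a restatement of Lemma~\ref{le:opt-efforts-S-shape}. I do not anticipate any serious obstacle: the proof is essentially a change of variables from ``$c_\xi(\cdot)$ compared with $B/k$'' to ``effort compared with $x^*(k)$'', and the only ingredients needed, namely strict monotonicity of $c_\xi$ and its closed-form inverse, are both immediate from \eqref{eq:cost-S-shaped-exp}.
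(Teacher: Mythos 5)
Your proposal is correct and matches the paper's own proof: the paper likewise specializes Theorem~\ref{thm:top-k} by inverting $c_\xi$ (via the explicit algebraic chain $c_\xi(x)\leq B/k \Leftrightarrow x \leq x^*(k)$ and $c_\xi(\tfrac{k}{k-1}x)\geq B/k \Leftrightarrow x \geq \tfrac{k-1}{k}x^*(k)$) and then cites Lemma~\ref{le:opt-efforts-S-shape} for the ``Moreover'' part. Your monotonicity-plus-unique-inverse phrasing is just a slightly more abstract rendering of the same computation (and it incidentally avoids the paper's typo, where the exponent in the lower bound appears as $k$ instead of $\xi$).
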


\section{Conclusions}\label{sec:con}

We have explored the questions of designing reward schemes and determining an appropriate number of representatives both from a game-theoretic and an optimization viewpoint. For proposing reward schemes, our results reveal that threshold-like mechanisms are more preferred as they seem to incentivize better the DReps on exerting more effort. Regarding the question of determining optimal committee sizes, our findings differ based on the type of cost function and on the available budget. In many cases however, as revealed in Section \ref{sec:optsol}, a small number of representatives seems appropriate for increasing the chances to select the correct outcome in the underlying election.

\bibliographystyle{plain}
\bibliography{biblio}

 \newpage
 \appendix
\section{Appendix with Omitted Proofs}

\subsection{Proof of Theorem~\ref{thm:proportional-nash}}
\begin{proof}
	
	Given a strategy profile $\vec{x} = (x_1,\dots, x_n)$, the utility of a player $i$ is
	$$\frac{x_i\cdot B}{\sum_j x_j}  - ax_i\ . $$
	By taking the first order conditions we get that at equilibrium, we must have:
	$$\frac{\partial f_i(\vec{x})}{\partial x_i} = c'(x_i) \Leftrightarrow B\cdot \frac{\sum_{j\neq i} x_j}{(\sum_j x_j)^2} = a ~~\forall i\ . $$
	Since we are looking for symmetric equilibria, we can set $x_j=x$ for every $j$ and solve the above system of equations. Then the nominator above becomes $B(n-1)x$ and the denominator equals $n^2x^2$, and therefore, the solution we get for $x$ is precisely the quantity in the statement of the theorem.
\end{proof}

\subsection{Proof of Theorem~\ref{thm:pro-sharing-bad}}
\begin{proof}
Without loss of generality, assume for convenience that $B=1$ and that $n$ is odd. Consider the symmetric equilibrium profile $\vec{x}$, with $x_i=x= \frac{n-1}{an^2}$ for every $i$.  Then, the probability of selecting the correct outcome equals

\begin{equation}
    \label{eq:correct}
    \psucc(\vec{x}) = \sum_{i=n/2+1}^n {n \choose i} (1/2+x)^i (1/2-x)^{n-i} \ . 
\end{equation}
After substituting the value of $x$, we get
\begin{align*}  
\psucc(\vec{x}) & = \sum_{i=n/2+1}^n {n \choose i} (1/2+ \frac{(n-1)}{an^2})^i (1/2-\frac{(n-1)}{an^2})^{n-i}  \\
& = \left( \frac{1}{2an^2} \right)^n \cdot \sum_{i=n/2+1}^n {n \choose i} (an^2+2n -2)^i (an^2-2n +2)^{n-i} \ . 
\end{align*}
Note now that for every $i$ in the summation, the term $(an^2+2n -2)^i (an^2-2n +2)^{n-i}$ is asymptotically equal to $(an^2)^n + o((an^2)^n)$.
By taking this out of the sum, and taking the limit, we have after the cancellations of these terms that:
$$\lim_{n \rightarrow \infty} \psucc(\vec{x}) = \lim_{n \rightarrow \infty} \frac{\sum_{i=n/2+1}^n {n \choose i}}{2^n} = \frac{1}{2}$$
where we used the fact that $\sum_{i=n/2+1}^n {n \choose i} = 2^{n-1}$.
\end{proof}

\subsection{Proof of Lemma~\ref{le:prob-succ-mon}}
\begin{proof}
    For the first part, we argue as follows. For any good subset $S$ for $k$ DReps, i.e., any subset of at least $k/2$ DReps, we construct a new subset $\hat S:=S \cup\{k+1\}$ which is good for $k+1$ DReps. Note that distinct subsets $S\neq S'$ for $k$ DReps are mapped into different new subsets $\hat{S}\neq \hat{S'}$ as we are adding the new DRep $k+1$. We now formally prove that the success probability in the new subsets is higher. Indeed,  
the initial probability is 
\begin{align}\label{eq:succ-subset}
    P_S(x):= (1/2+x)^{|S|}(1/2-x)^{k-|S|} 
\end{align}
and the probability of the new subset is
\begin{align}\label{eq:mapping-prob}
    P_{\hat{S}}(x): = (1/2+x)^{|S|+1}(1/2-x)^{k-|S|-1} =P_S(x) \cdot \frac{1/2+x}{1/2-x} \geq  P_S(x) \ . 
\end{align}
Putting things together, 
\begin{align*}
    \psucc(x,k) = & \sum_{S: |S|>k/2} P_S(x) + \frac{1}{2}\sum_{S: |S|=k/2} P_S(x) \\
    \stackrel{\eqref{eq:mapping-prob}}{<} & \sum_{S: |S|>k/2} P_{\hat{S}}(x) + \frac{1}{2}\sum_{S: |S|=k/2} P_{\hat{S}}(x)
    \\
    \leq & \sum_{\hat{S}: |\hat{S}|>(k+1)/2} P_{\hat{S}}(x) + \frac{1}{2}\sum_{\hat{S}: |\hat{S}|=(k+1)/2} P_{\hat{S}}(x) = \psucc(x,k+1) \ , 
\end{align*}
where the last inequality follows from the observation that, in our mapping, $|\hat{S}|=|S|+1$ and therefore $|\hat{S}|\geq k/2 + 1>(k+1)/2$ whenever $|S|\geq k/2$.

As for the second part, we show that the  probability of failure associated to each bad subset $S$, i.e., a subset of at most $k/2$ DReps,  is monotone decreasing in $x$. Indeed, for any $S$ with $|S|\leq k/2$, 
 \begin{align*}
    P_S(x) \stackrel{\eqref{eq:succ-subset}}{=} & ((1/2+x)(1/2-x))^{|S|}(1/2+x)^{k-2|S|} \\
    & =  (1/4-x^2)^{|S|}(1/2-x)^{k-2|S|} 
    \ , 
\end{align*}
and  because $k-2|S|\geq 0$ both terms are decreasing in $x$. Therefore $P_S(x)\geq P_S(x')$ for any $x'>x$, thus implying the following inequality:
\begin{align*}
    \pfail(x,k) := 1 - \psucc(x,k) = & \sum_{S: |S|<k/2} P_S(x) + \frac{1}{2}\sum_{S: |S|=k/2} P_S(x) \\
    \geq & \sum_{S: |S|<k/2} P_S(x') + \frac{1}{2}\sum_{S: |S|=k/2} P_S(x')
    \\
    = & \pfail(x',k)  = 1 - \psucc(x',k)
\end{align*}
which completes the proof. 
\end{proof}

\subsection{Proof of Theorem~\ref{thm:top-k}}\label{app:proof-thm:top-k}
\begin{proof}
	First, notice that by the definition of the mechanism, if more than $k$ players exert positive effort, only $k$ of them (the ones with the highest effort) will be rewarded. To see this, observe that given the fact that  $\sum_i w_i=1$, it is not possible for more than $k$ in total $w_i$'s to be bigger than $1/k$. Therefore, there can be no equilibrium where more than $k$ players exert a positive effort, since then the lowest-effort players would have an incentive to deviate to zero effort. At the same time, there can be no equilibrium where the set of players with positive effort has a cardinality that is less than $k$. To see this, assume for contradiction that there exists an equilibrium where $S$ is the set of players that exerts positive effort, and and for which we have that $|S|=l<k$. For each player $i \in S$, it holds that $w_i \geq 1/k$ as otherwise $i$ could deviate to zero effort and improve her utility. Now, let $i^*$ be the player in $S$ that exerts the maximum effort. It is easy to see that $\frac{x_{i^*}}{\sum_{i \in S}x_i} \geq 1/l>1/k$. By the latter, we derive that there always exists an $\epsilon>0$ such that $\frac{x_{i^*}-\epsilon}{\sum_{i \in S}x_i-\epsilon} \geq 1/k$. Therefore, player $i^*$ could deviate to $x_i^* -\epsilon$ effort and claim the reward at a lower cost (recall that the cost function is strictly increasing). As this deviation improves her utility, we end up to a contradiction. 
	
	From the above discussion, we get that at an equilibrium we have exactly $k$ players that exert positive effort, and for every such player $i$, we have $w_i \geq 1/k$ (as otherwise she could improve her utility by deviating to zero effort). Combining the aforementioned conditions with the fact that $\sum_i w_i=1$, we get that all the players with $x_i>0$ exert exactly the same effort. Therefore, we conclude that the only profiles that we need to examine are the ones where exactly $k$ players make a positive and equal effort.

	Consider such a strategy profile in the form $(x,\dots, x, 0,\dots, 0)$.
	Let us look at a player who does not make any effort. If this player deviates in order to get better off, the only meaningful action is to select an effort level $\bar{x}>x$ so as to attract a delegation of at least $1/k$. To do so, $\bar{x}$ should satisfy 
	$$ \frac{\bar{x}}{kx+ \bar{x}} \geq 1/k ~~\Rightarrow~~ \bar{x} \geq \frac{k}{k-1}x \ . $$
	If $\frac{k}{k-1}x>1/2$, then this is not a feasible deviation. Otherwise, the deviation is feasible, but since we do not want it to be a successful one, the total utility after the deviation should be non-positive. Therefore we must have that $c(\bar{x}) \geq B/k$ for any $\bar{x} \geq \frac{k}{k-1}x$. The cost function is non-decreasing, hence it suffices that 
	$$c\left(\frac{k}{k-1}x\right) \geq \frac{B}{k} \ . $$
	
	We come now to the players who exert effort $x$.
	None of these players has an incentive to exert a higher effort, since they will not receive a higher payment. If on the other hand such a player $i$ makes a lower effort $x'<x$, then her reward drops to $0$, as $\frac{x'}{\sum_{j \neq i}x+x'}<\frac{x}{\sum_{j}x}=\frac{1}{k}$. Hence, it suffices that her initial utility is non-negative to ensure we are at an equilibrium, i.e., $B/k\geq c(x)$. This completes the proof.
\end{proof}

\subsection{Proof of Theorem~\ref{thm:v1}}
\begin{proof}
	As in the proof of Theorem \ref{thm:top-k}, there can be no pure equilibrium where more than $k$ players exert a positive effort.
	
	Hence, let us focus on the existence of equilibria that are symmetric w.r.t. the DReps who exert a positive effort.
	Consider first a profile where $k$ people exert a positive effort $x$. 
	Suppose now that one of these $k$ DReps is deviating to exert effort $x+\epsilon$, for some small $\epsilon>0$. 
	By doing so, she will become the only player attracting at least a fraction of $1/k$ of delegations (all the others will attract slightly less than $1/k$ of the total delegations). Hence she will have the budget $B$ by herself. 
	Now for this to be a successful deviation, the following needs to hold for some $\epsilon$:
	$$ B - c(x+\epsilon) > \frac{B}{k} -c(x) \Rightarrow c(x+\epsilon) - c(x) < (1-\frac{1}{k})B $$
	Since the cost function is continuous, we can always find an $\epsilon$ satisfying the above condition for any $x$. Therefore, the profile we started with cannot be an equilibrium.  
	
	Consider now a profile where $\ell$ people exert a positive effort $x$, for some $\ell< k$. 
	Suppose now that one of these $\ell$ DReps, say voter $i$, is deviating to exert effort $x-\epsilon$. It is straightforward that since $\ell<k$, there exists a small enough $\epsilon$ so that  $x-\epsilon$ is still at least a $1/k$ fraction of the total effort exerted. Hence DRep $i$ will still attract at least $1/k$ of the total delegations, and the same is true for the other $\ell-1$ DReps. This means that $i$ will continue to receive the same reward but with a lower cost since the cost function is strictly increasing, which means that the profile cannot be an equilibrium.

	Finally, consider a non-symmetric profile, where at least 2 DReps have different efforts, say $x_i > x_j$, where $x_i$ is the maximum effort exerted. But then DRep $i$ can lower her effort by some $\epsilon$ and still maintain at least a $1/k$ fraction of the delegations, and the same reward. Hence such a profile also cannot be an equilibrium.
\end{proof}

\subsection{Proof of Lemma~\ref{le:three-vs-one}}
\begin{proof}
    For $x=x^*(1)$ and $y=x^*(3)$, we consider the function
    \begin{align}
   f(x,y) := &   \psucc(x^*(1)) - \psucc(x^*(3),x^*(3),x^*(3))  \label{eq:fun-1-vs-3-gen}
    \\ 
    =&  1/2 + x - 1 + (1-p_{x^*(3)})^3 + 3p_{x^*(3)}(1-p_{x^*(3)})^2 \\ = & x-\dfrac{1}{2} +\left(\dfrac{1}{2}-y\right)^3+3\left(\dfrac{1}{2}-y\right)^2\left(\dfrac{1}{2}+y\right)
\end{align}
whose roots are given by the identity
$$x=\frac{y(3 - 4 y^2)}{2} \ , $$
for $y < 1/2$ which implies  $x>0$. 
The lemma follows from the observation that $f(x,y)$ is increasing in $x$ and, by definition, $f(x,y)<0$ if and only if $\psucc(x^*(1)) < \psucc(x^*(3),x^*(3),x^*(3))$. \end{proof}

\subsection{Proof of Theorem~\ref{thm:psucc-ub-general}}\label{sec:proof:thm:psucc-ub-general}

The remaining of this subsection is dedicated to the proof of Theorem~\ref{thm:psucc-ub-general}.
In order to facilitate the recursion that we use in the proof below, we introduce an additional definition on the 
probability that a DRep votes correctly, and also a variant on the overall success probability. First, we let $C(y,p_y^C)$ denote an artificial ``compound'' DRep (to be used shortly), who contributes a positive weight $y$ on the overall success with some probability $p_y^C$, and a negative weight $-y$ with probability $p_{-y}^C = 1 - p_y^C$. Second, we let $\psuccub(\cdot)$ be the probability of success for the variant in which, when the votes lead to a tie, the correct outcome is selected with probability $1$, instead of $1/2$. We thus have 
\begin{align}\label{eq:p_succ-ub}
	\psucc(x_1,x_2,\ldots,x_n) \leq \psuccub(x_1,x_2,\ldots,x_n)\ . 
\end{align}
Without loss of generality, assume $x_1\geq x_2$ and let $S$ denote the event that DReps $1$ and $2$ vote in the same way, and by $\overline{S}$ its complement. Recall also that $p_{x_i} = 1/2+x_i$. We then have
\begin{align}
	P(S) = p_{x_1}p_{x_2} + (1-p_{x_1})(1-p_{x_2})\ ,  && P(\overline{S}) = p_{x_1}(1-p_{x_2}) + (1-p_{x_1})p_{x_2} = 1-P(S) \ . 
\end{align}
For each event, the resulting compound DRep (arising by trying to view Drep 1 and 2 as one entity) has a  probability of yielding a positive weight towards selecting the correct outcome, given by
\begin{align} 
	p_{x_1+x_2}^C =\frac{p_{x_1}p_{x_2}}{P(S)}\ ,  && p_{x_1-x_2}^C =\frac{p_{x_1}(1-p_{x_2})}{P(\overline{S})} \ . 
\end{align}
We next relate the probabilities of the compound DRep to the usual DRep probabilities. 
\begin{lemma}\label{le:compound-probs} 
	For any $x_1\geq x_2$ we have 
	\begin{align}
		p_{x_1+x_2}^C = p_{x_1+x_2} - q_{x_1+x_2} && q_{x_1+x_2} = (x_1+x_2) \cdot \frac{4x_1x_2}{1 + 4x_1x_2}\geq 0  \\
		p_{x_1-x_2}^C = p_{x_1-x_2} - q_{x_1-x_2}  && q_{x_1-x_2} =  (x_1-x_2) \cdot \frac{-4x_1x_2}{1 - 4x_1x_2} \leq 0 
	\end{align}
	thus implying $p_{x_1+x_2}^C \leq  p_{x_1+x_2}$ and $p_{x_1-x_2}^C \geq  p_{x_1-x_2}$.
\end{lemma}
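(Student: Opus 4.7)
The plan is to prove both identities by direct algebra: I would substitute $p_{x_i} = 1/2 + x_i$ and $1 - p_{x_i} = 1/2 - x_i$ into the definitions of $p_{x_1 + x_2}^C$ and $p_{x_1 - x_2}^C$, simplify, and then match the resulting closed form against the claimed decomposition $p_{x_1 \pm x_2} - q_{x_1 \pm x_2}$.

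First I would compute the two normalising probabilities. A short expansion gives
\[
P(S) = (\tfrac{1}{2} + x_1)(\tfrac{1}{2} + x_2) + (\tfrac{1}{2} - x_1)(\tfrac{1}{2} - x_2) = \tfrac{1}{2} + 2 x_1 x_2,
\]
and consequently $P(\overline{S}) = \tfrac{1}{2} - 2 x_1 x_2$. Since the model assumes $x_1, x_2 \in [0, 1/2]$, both are nonnegative, and strictly positive except at the degenerate corner $x_1 = x_2 = 1/2$ (where the two DReps deterministically agree, $p_{x_1 - x_2}^C$ is undefined, but $x_1 = x_2$ so the second identity is vacuous).

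Next, for the ``agree'' branch, I would expand $p_{x_1} p_{x_2} = 1/4 + (x_1 + x_2)/2 + x_1 x_2$ and divide by $P(S) = (1 + 4 x_1 x_2)/2$. Clearing denominators yields
\[
p_{x_1 + x_2}^C = \frac{1 + 2(x_1 + x_2) + 4 x_1 x_2}{2(1 + 4 x_1 x_2)} = \frac{1}{2} + \frac{x_1 + x_2}{1 + 4 x_1 x_2}.
\]
Adding and subtracting $(x_1 + x_2)$ in the last expression converts it into $p_{x_1 + x_2} - (x_1 + x_2)\cdot \frac{4 x_1 x_2}{1 + 4 x_1 x_2}$, which is exactly the claimed identity. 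The ``disagree'' branch is entirely analogous, using $p_{x_1}(1 - p_{x_2}) = 1/4 + (x_1 - x_2)/2 - x_1 x_2$ and $P(\overline{S}) = (1 - 4 x_1 x_2)/2$, producing $\tfrac{1}{2} + \tfrac{x_1 - x_2}{1 - 4 x_1 x_2}$, which decomposes as $p_{x_1 - x_2} - q_{x_1 - x_2}$ by the same add-and-subtract step.

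Finally, the sign assertions are read off the closed forms: $q_{x_1 + x_2} \geq 0$ is immediate from $x_1, x_2 \geq 0$, and $q_{x_1 - x_2} \leq 0$ follows because $(x_1 - x_2) \geq 0$ by the standing assumption $x_1 \geq x_2$, the factor $-4 x_1 x_2$ is nonpositive, and the denominator $1 - 4 x_1 x_2$ is strictly positive on the relevant domain. I do not expect any real obstacle here; the argument is essentially two lines of algebra. The only technical point to double-check is the positivity of $P(\overline{S})$, ensuring the conditional probability $p_{x_1 - x_2}^C$ is well defined before the manipulations, which the domain assumption on $(x_1, x_2)$ handles.
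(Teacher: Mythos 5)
Your proof is correct and follows essentially the same route as the paper's: direct substitution of $p_{x_i} = 1/2 + x_i$, computation of $P(S) = 1/2 + 2x_1x_2$ and $P(\overline{S}) = 1/2 - 2x_1x_2$, and simplification to the closed forms $\frac{1}{2} + \frac{x_1 \pm x_2}{1 \mp 4x_1x_2}$, from which the claimed decompositions follow. Your version is in fact slightly more careful than the paper's, since you make the add-and-subtract step explicit and verify that $P(\overline{S}) > 0$ on the relevant domain so that $p_{x_1-x_2}^C$ is well defined.
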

\begin{proof}
	Observe that
	\begin{align}
		p_{x_1+x_2}^C = \frac{p_{x_1}p_{x_2}}{p_{x_1}p_{x_2} + (1-p_{x_1})(1-p_{x_2})} = \frac{1/4+\frac{x_1+x_2}{2}+x_1x_2}{1/2+2x_1x_2} = \frac{1}{2} + \frac{x_1+x_2}{1 + 4x_1x_2}   \\
		p_{x_1-x_2}^C = \frac{p_{x_1}(1-p_{x_2})}{p_{x_1}(1- p_{x_2}) + (1-p_{x_1})p_{x_2}} = \frac{1/4+\frac{x_1-x_2}{2}-x_1x_2}{1/2-2x_1x_2} = \frac{1}{2} + \frac{x_1-x_2}{1 - 4x_1x_2} \ .   
	\end{align}
\end{proof}

We are now in a position to prove the main result of this section. 
\begin{proof}[Proof of Theorem~\ref{thm:psucc-ub-general}]
	We prove by  induction on $n$ that $	\psuccub(x_1,x_2,\ldots,x_n) \leq \psuccub(x_1+x_2+\cdots+x_n)$, which implies the theorem by inequality \eqref{eq:p_succ-ub} and since for a single DRep, $\psuccub(y) = p_y = \psucc(y)$ for any value $y\geq0$.   The base case $n=1$ holds by definition since $\psuccub(x_1)=p_{x_1}$.
	Without loss of generality, assume $x_1\geq x_2$ and let $S$ denote the event that DReps $1$ and $2$ vote in the same way, and by $\overline{S}$ it complement. We denote by $\psuccub(\cdot|E)$ the probability of success conditioned to event $E$. Then, for  $q_S := p^C_{x_1+x_2}$ and $q_{\overline{S}} :=  p^C_{x_1-x_2}$, we have
	\begin{align}\psuccub(x_1,\ldots,x_n) & = \psuccub(x_1,\ldots,x_n|S)P(S) + \psuccub(x_1,\ldots,x_n|\overline{S})P(\overline{S})  \\
		& = \psuccub(C(x_1+x_2,q_S),x_3,\ldots,x_n)P(S) + \psuccub(C(x_1-x_2,q_{\overline{S}}),x_3,\ldots,x_n)P(\overline{S}) \ . 
	\end{align}
	Let $OTHS\geq\alpha$ be the event that the votes of the other DReps, i.e., $3,\ldots,n$  contribute a sum of at least $\alpha$, where we use the interpretation described after \eqref{eq:score-rewritten} that a voter contributes either a positive or negative weight when voting for the correct or the wrong outcome respectively.
	From Lemma~\ref{le:compound-probs}
	\begin{align*}
		\psuccub(C(x_1+x_2),x_3,\ldots,x_n)  & = \\ 
		p_{x_1+x_2}^C\cdot P(OTHS\geq-x_1-x_2) + (1 - p_{x_1+x_2}^C)\cdot P(OTHS\geq x_1+x_2) & =   \\  
		(p_{x_1+x_2} - q_{x_1+x_2})\cdot P(OTHS\geq-x_1-x_2) + (1 - p_{x_1+x_2} + q_{x_1+x_2})\cdot P(OTHS\geq x_1+x_2) & =  \\   
		p_{x_1+x_2} \cdot P(OTHS\geq-x_1-x_2) + (1 - p_{x_1+x_2})\cdot P(OTHS\geq x_1+x_2) & + \\   
		q_{x_1+x_2}\cdot [P(OTHS\geq x_1+x_2) - P(OTHS\geq-x_1-x_2)] & = \\ \psuccub(x_1+x_2,x_3,\ldots,x_n) + q_{x_1+x_2}\cdot [P(OTHS\geq x_1+x_2) - P(OTHS\geq-x_1-x_2)] & \ . 
	\end{align*}
	Similarly,  
	\begin{align*}
		\psuccub(C(x_1-x_2),x_3,\ldots,x_n)  & = \\ 
		p_{x_1-x_2}^C\cdot P(OTHS\geq-x_1+x_2) + (1 - p_{x_1-x_2}^C)\cdot P(OTHS\geq x_1-x_2) & =   \\  
		(p_{x_1-x_2} - q_{x_1-x_2})\cdot P(OTHS\geq-x_1+x_2) + (1 - p_{x_1-x_2} + q_{x_1-x_2})\cdot P(OTHS\geq x_1-x_2) & =  \\   
		p_{x_1-x_2} \cdot P(OTHS\geq-x_1+x_2) + (1 - p_{x_1-x_2})\cdot P(OTHS\geq x_1-x_2) & + \\   
		q_{x_1-x_2}\cdot [P(OTHS\geq x_1-x_2) - P(OTHS\geq-x_1+x_2)] &  = \\   
		\psuccub(x_1-x_2,x_3,\ldots,x_n) + q_{x_1-x_2}\cdot [P(OTHS\geq x_1-x_2) - P(OTHS\geq-x_1+x_2)] & \ . 
	\end{align*}
	Putting things together
	\begin{align}
		\psuccub(x_1,x_2,\ldots,x_n) & = \psuccub(x_1+x_2,x_3,\ldots,x_n)P(S) + \psuccub(x_1-x_2,\ldots,x_n)P(\overline{S}) 
		\nonumber\\
		&  + q_{x_1+x_2}\cdot [P(OTHS\geq x_1+x_2) - P(OTHS\geq-x_1-x_2)] P(S) \nonumber\\ 
		& + q_{x_1-x_2}\cdot [P(OTHS\geq x_1-x_2) - P(OTHS\geq-x_1+x_2)]P(\overline{S}) \nonumber\\
		& \leq \label{eq:inductive-step} \psuccub(x_1+x_2+x_3+\cdots+x_n) -2x_2 P(\overline{S}) 
		\\
		&  + q_{x_1+x_2}\cdot [P(OTHS\geq x_1+x_2) - P(OTHS\geq-x_1-x_2)] P(S) \nonumber\\ 
		& + q_{x_1-x_2}\cdot [P(OTHS\geq x_1-x_2) - P(OTHS\geq-x_1+x_2)]P(\overline{S}) \nonumber  
	\end{align}
	where the inequality follows by the inductive hypothesis. To  complete the proof we need to show that the above quantity is at most $\psuccub(x_1+x_2+x_3+\cdots+x_n)$. It is enough to prove 
	\begin{align}
		q_{x_1+x_2}\cdot [P(OTHS\geq x_1+x_2) - P(OTHS\geq-x_1-x_2)] P(S) + \nonumber \\
		+ q_{x_1-x_2}\cdot [P(OTHS\geq x_1-x_2) - P(OTHS\geq-x_1+x_2)]P(\overline{S})  & \leq 0 \ . \label{eq:two-additional-terms}
	\end{align} 
	Note that we have the following properties
	\begin{align}
		P(OTHS\geq x_1+x_2) -  P(OTHS\geq-x_1-x_2) & = -P(OTHS\in [-x_1-x_2,x_1+x_2)) \\ 
		P(OTHS\geq x_1-x_2) - P(OTHS\geq-x_1+x_2) & =  
		-P(OTHS\in [-x_1+x_2,x_1-x_2)) 
	\end{align}
	and 
	\begin{align}   
		P(S) = 1/2 + 2x_1x_2\ ,  &&  q_{x_1+x_2} = (x_1+x_2) \cdot \frac{4x_1x_2}{1 + 4x_1x_2} = (x_1+x_2) \cdot \frac{2x_1x_2}{P(S)} \ ,  \\ 
		P(\overline{S}) = 1/2 - 2x_1x_2\ ,  && q_{x_1-x_2} =  (x_1-x_2) \cdot \frac{-4x_1x_2}{1 - 4x_1x_2} = (x_1-x_2) \cdot \frac{-2x_1x_2}{P(\overline{S})} \ . 
	\end{align} 
	Hence 
	\begin{align*}
		q_{x_1+x_2}\cdot [P(OTHS\geq x_1+x_2) - P(OTHS\geq-x_1-x_2)] P(S) & = \\ 
		2x_1x_2 (x_1+x_2)[-P(OTHS\in [-x_1-x_2,x_1+x_2)] \ , \intertext{and}\\
		q_{x_1-x_2}\cdot [P(OTHS\geq x_1-x_2) - P(OTHS\geq-x_1+x_2)]P(\overline{S}) & = \\
		2x_1x_2 (x_1-x_2) [P(OTHS\in [-x_1+x_2,x_1-x_2)] & \leq   \\
		2x_1x_2 (x_1+x_2) [P(OTHS\in [-x_1+x_2,x_1-x_2)]
		& \leq   \\
		2x_1x_2 (x_1+x_2) [P(OTHS\in [-x_1-x_2,x_1+x_2)] \ . 
	\end{align*} 
	By putting together these equations we obtain \eqref{eq:two-additional-terms}. The latter together with \eqref{eq:inductive-step} implies   $$\psuccub(x_1,x_2,\ldots,x_n) \leq \psuccub(x_1+x_2+x_3+\cdots+x_n).$$ This completes the proof. 
\end{proof}

\subsection{Proof of Theorem~\ref{thm:concave-convex-lb-effort}}
\begin{proof}
	Let  $c_{lin}(x) := \alpha_{c,B} \cdot x$ be as in Definition~\ref{def:concave-convex-line-general}, that is, 
	\begin{align}
		c(x_1) = B = \alpha_{c,B} \cdot x_1\ , 
	\end{align}
	where $x_1$ as above is the effort that one DRep can use for cost function $c(\cdot)$ and budget $B$. 
	Then, by the concave part of $c(x)$ and by the definition of $x_{int}$, we have
	\begin{align}
		c(x) > c_{lin}(x) = \alpha x && \text{for all } x\in (0,x_{int}) \label{eq:c-general-vs-linear}
		\intertext{and therefore}
		\DRepsopt{c}{x} \leq \DRepsopt{c_{lin}}{x}   && \text{for all } x\in (0,x_{int})  \ . \label{eq:d-general-vs-linear}
	\end{align}
	Hence,  for any $x \in (0,x_{int})$ we have 
	\begin{align*}
		\psuccun(x,\DRepsopt{c}{x}) \leq & \psuccun(x,\DRepsopt{c_{lin}}{x})  && \text{(Lemma~\ref{le:prob-succ-mon} and \eqref{eq:d-general-vs-linear})} \\
		\leq & \psuccun(x_{int},\DRepsopt{c_{lin}}{x_{int}}) && \text{(Lemma~\ref{le:prob-succ-mon} and } x < x_{int}) \\  = & \psuccun(x_{int},1)&& \text{($\DRepsopt{c}{x_{int}} =  \DRepsopt{c_{lin}}{x_{int}}=1$)} 
		\ .  
	\end{align*}
	This completes the proof.
\end{proof}

\subsection{Proof of Theorem~\ref{thm:cocave-convex-tangent}}
\begin{proof}From Definition~\ref{def:points}, for any $x\leq x_{tangent}^c$ we have $c_{lin}(x) \leq c(x)$ and thus $\DRepsopt{c}{x}\leq \DRepsopt{c_{lin}}{x}$, with equality holding for $x=x_{tangent}^c$, that is $\DRepsopt{c}{x_{tangent}^c} =  \DRepsopt{c_{lin}}{x_{tangent}^c}$.  Hence,  
	\begin{align*}
		\psucc(x,\DRepsopt{c}{x}) \leq && \text{(Lemma~\ref{le:prob-succ-mon} and $\DRepsopt{c}{x}\leq \DRepsopt{c_{lin}}{x}$)} \\
		\psucc(x,\DRepsopt{c_{lin}}{x}) \leq && \text{(Assumption~\ref{as:prob-succ-lin-mon} and $x\leq x_{tangent}^c$)} \\ \psucc(x_{tangent}^c,\DRepsopt{c_{lin}}{x_{tangent}^c}) = && \text{($\DRepsopt{c}{x_{tangent}^c} =  \DRepsopt{c_{lin}}{x_{tangent}^c}$)} \\ 
		\psucc(x_{tangent}^c,\DRepsopt{c}{x_{tangent}^c}) \ .  \phantom{=} && 
	\end{align*}
	This completes the proof.
\end{proof}

\subsection{Proof of Theorem~\ref{th_three-vs-one}}
\begin{proof}Since the optimal efforts with one and three DReps, $x^*(1)$ and $x^*(3)$ respectively, satisfy $c(x^*(1))=B=3c(x^*(3))$, we have $x^*(3) = \frac{x^*(1)}{3^{1/\beta}}$. Condition \eqref{eq:three-vs-one} in Lemma~\ref{le:three-vs-one} boils down to 
    \begin{align}\label{eq:three-vs-one:monomials}
   x^*(1)<\frac{3\frac{x^*(1)}{3^{1/\beta}} - 4 \frac{x^*(1)^3}{3^{3/\beta}}}{2}  &&
   \Leftrightarrow 
   && 
   2 - 3\frac{1}{3^{1/\beta}} < - 4 \frac{x^*(1)^2}{3^{3/\beta}}
   &&
   \Leftrightarrow 
   && 
     x^*(1)^2 < \frac{3^{3/\beta}}{4} (3^{1 - 1/\beta}  - 2) 
     \ . 
\end{align}

 For $\beta <\beta^\star$, we have $3^{1 - 1/\beta} < 3^{1 - 1/\beta^\star} =  3^{\ln(2)/\ln(3)}=2$, thus implying that the inequality above cannot hold. In particular, for all $x^*(1)$ the opposite inequality holds strictly and thus (Lemma~\ref{le:three-vs-one}) one DRep is better that three DReps.    This proves the first part of the theorem. 

 For $\beta > \beta^\star$, the inequality above holds and it actually gives
 \begin{align}
    x^*(1) < \frac{1}{2}\sqrt{3^{3/\beta} (3^{1 - 1/\beta}  - 2)} 
 \end{align}
 which combined with the identity $B=c(x^*(1))=x^*(1)^\beta $ yields
 \begin{align}
    B^{1/\beta} < \frac{1}{2}\sqrt{3^{3/\beta} (3^{1 - 1/\beta}  - 2)} && \Leftrightarrow && B < 3^{3/2} \cdot \left(\frac{3^{1 - 1/\beta}  - 2}{4}\right)^{\beta/2} = B^\star
 \end{align}
which proves the second part of the theorem. This completes the proof.
\end{proof}

\subsection{Proof of Lemma~\ref{le:inflection-S-shape}}
\begin{proof}
	As shown in \cite{murre2014s}, the learning curve in \eqref{eq:learning-S-shape-exp} has an inflection point at $\Tilde{t} = \ln(\xi)/\mu$, for all $\xi>1$. Thus, the corresponding inflection point $\Tilde{x}$ for the cost function \eqref{eq:cost-S-shaped-exp} corresponds to 
	\begin{align}
		2\Tilde{x} = p_\xi(\Tilde{t}) = [1 - \exp(-\mu \Tilde{t})]^\xi = [1 - \exp(-\ln(\xi))]^\xi = [ 1 - 1/\xi]^\xi 
	\end{align}
	which proves the statement regarding the inflection point. To see why $c_{\xi}$ is concave-convex we observe the following. The function $p_{\xi}(t)$ is shown to be strictly increasing \emph{convex-concave} in its argument $t$ in \cite{murre2014s}. Then the inverse function $p_{\xi}^{-1}(p)$ exists and is \emph{concave-convex} in $p\in(0,1)$. Finally, our cost function $c_\xi(x)$ satisfies $p=1/2 + x$ and $t=c_\xi(x) = p_{\xi}^{-1}(1/2+x)$, and thus $c_\xi(x)$ is concave-convex in $x\in(0,1/2)$. 
\end{proof}

\subsection{Proof of Lemma~\ref{le:opt-efforts-S-shape}}
\begin{proof}
	The optimal $x^*(k)$ is given by the  $x$ such that $k \cdot c(x)=B$, which for this particular cost function boils down to satisfy
	\begin{align}
		c(x) = -\frac{1}{\mu}\ln (y) = B/k\ ,  && y = 1 - (2x)^{1/\xi} \ . 
	\end{align}
	By rearranging the terms
	\begin{align}
		\exp(-B\mu/k) = y = 1 - (2x)^{1/\xi} && \Rightarrow &&
		2x = [1 - \exp(-B\mu/k)]^\xi   
	\end{align}
	and the latter equation together with \eqref{eq:learning-S-shape-exp} yields the desired result on $x^*(k)$.
\end{proof}

\subsection{Proof of Corollary~\ref{cor:S-shape-one-vs-three}}
We next quantify the optimal symmetric efforts for this family of functions.

\begin{lemma}\label{le:opt-efforts-S-shape}
    For the cost function $c_{\xi}$  in \eqref{eq:cost-S-shaped-exp} the optimal symmetric effort $x^*(k)$, given a budget $B$, is equal to 
    $
        x^*(k) = \frac{p_\xi(B/k)}{2} = \frac{[1-\exp(-B\mu/k)]^\xi}{2} \ . 
    $
\end{lemma}
We are now in a position to  apply the results in Section~\ref{sec:concave-convex} to this specific class of concave-convex cost functions parameterized in $\xi>1$, the complexity parameter. 
\begin{proof}[Proof of Corollary~\ref{cor:S-shape-one-vs-three}]
	The first part follows from Corollary~\ref{cor:geometric}, with the observation that $B^* = c(x_{tangent}^c)> c(x_{inflection}^c)=\ln(\xi)/\mu$, since $x_{tangent}^c>x_{inflection}^c$ and because of Lemma~\ref{le:inflection-S-shape}. The second part follows from Lemma~\ref{le:opt-efforts-S-shape}.
\end{proof}

\subsection{Proof of Theorem~\ref{th:S-shape-euilibria}}
\begin{proof} We rewrite the condition on the effort $x$ in Theorem~\ref{thm:top-k}, and apply the definition of our specific cost function. 
For the case $\frac{k}{k-1}x\in (0, 1/2]$, the conditions are: 
    \begin{align*}
    c\left(\frac{k}{k-1}x\right) \geq B/k \geq c(x) & \Longleftrightarrow \\
    -\frac{1}{\mu}\ln \left(1 - (\frac{2k}{k-1}x)^{1/\xi}\right) \geq B/k \geq -\frac{1}{\mu}\ln \left(1 - (2x)^{1/\xi}\right) & \Longleftrightarrow \\
    \ln \left(1 - (\frac{2k}{k-1}x)^{1/\xi}\right) \leq -\mu B/k \leq \ln \left(1 - (2x)^{1/\xi}\right) & \Longleftrightarrow \\
    1 - (\frac{2k}{k-1}x)^{1/\xi} \leq \exp(-\mu B/k) \leq 1 - (2x)^{1/\xi} &  
\end{align*}
that is 
\begin{align}\frac{k-1}{k}\cdot \frac{[1 - \exp(-\mu B/k)]^k}{2} \leq  x
&& \text{and} && 
       x \leq \frac{[1 - \exp(-\mu B/k)]^\xi}{2} \ . 
\end{align}
This and Lemma~\ref{le:opt-efforts-S-shape} yield the desired result in this case.

Finally, for the  case $\frac{k}{k-1}x> 1/2$ we have only the right inequality above. 
\end{proof}

\section{Evidence for Assumption~\ref{as:prob-succ-lin-mon}}\label{app:assumption-weaker}
In this section, we provide some evidence in support of Assumption~\ref{as:prob-succ-lin-mon}.  
Obviously,  Theorem~\ref{thm:psucc-ub-general} says that Assumption~\ref{as:prob-succ-lin-mon} holds in the special case of $k'=1$. In further support of Assumption~\ref{as:prob-succ-lin-mon}, we prove the following weaker result saying that monotonicity holds in the following case. First, we consider the case in which, instead of reducing the number of DReps, we \emph{halve} them. That is, we have $k'=2k$ in Theorem~\ref{thm:psucc-ub-general}. Furthermore,  we consider  the variant in which, when the votes lead to a tie, the correct outcome is selected with probability $1$, instead of $1/2$. Note that is the same variant used in the proof of Theorem~\ref{thm:psucc-ub-general}, and the corresponding probability is denoted as $\psuccub(\cdot)$. The next theorem states that under these two conditions, monotonicity hold. 

\begin{theorem}[weaker version of Assumption~\ref{as:prob-succ-lin-mon}]\label{thm:as-prob-succ-lin-mon}
	For any linear cost function $c_{lin}(x) = a\cdot x$ with $a>0$, the following holds.  For any $x \leq x'$
	\begin{align}
		\psuccub(x,k) \leq \psuccub(x',k') 
	\end{align}
whenever $k = \DRepsopt{c_{lin}}{x}$ and $k' = \DRepsopt{c_{lin}}{x'}$ satisfy $k=2k'$. 
\end{theorem}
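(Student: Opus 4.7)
The plan is to follow the compound-DRep and pairing technique used in the proof of Theorem~\ref{thm:psucc-ub-general}. First I would isolate the following pairwise merging inequality for two equal-effort DReps:
\[
\psuccub(x, x, y_1,\ldots,y_m) \;\leq\; \psuccub(2x, y_1,\ldots,y_m).
\]
Specializing the conditioning step of Theorem~\ref{thm:psucc-ub-general} to $x_1=x_2=x$, the ``disagreement'' compound DRep $C(x_1-x_2,q_{\overline S}) = C(0,1/2)$ is inert (it always contributes $0$ to the signed sum), and after one application of Lemma~\ref{le:compound-probs} one is left with
\[
\psuccub(x,x,\vec y) \;=\; \psuccub(2x,\vec y)\,P(S) \;+\; \psuccub(\vec y)\,P(\overline S) \;-\; q_{2x}\cdot P\bigl(OTHS\in[-2x,2x)\bigr)\cdot P(S).
\]
Two ingredients then close the merging step: the residual error term is non-positive, and $\psuccub(\vec y)\leq \psuccub(2x,\vec y)$ -- adding a DRep never hurts under the $\psuccub$ convention, which follows by a direct coupling on the new DRep's vote together with the positive bias $P(OTHS\geq 0)\geq P(OTHS<0)$ of the remaining DReps.

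Applying the merging inequality $k'$ times to the symmetric profile of length $2k'$ then gives
\[
\psuccub(x,k) \;=\; \psuccub(\underbrace{x,\ldots,x}_{2k'}) \;\leq\; \psuccub(\underbrace{2x,\ldots,2x}_{k'}).
\]
The remaining comparison with $\psuccub(x',k')$ involves two symmetric profiles of size $k'$; by Lemma~\ref{le:prob-succ-mon} it reduces to the pointwise inequality $2x\leq x'$, which I would try to extract from the two identities $k = 2k' = \lfloor B/(ax)\rfloor$ and $k' = \lfloor B/(ax')\rfloor$.

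The hard part is precisely this pointwise inequality. The two floors give only $2k'ax \leq B < (k'+1)ax'$, hence $2x < x'(1+1/k')$, which can miss the target by an additive $x'/k'$. When $2x>x'$ the symmetric monotonicity step no longer suffices, and I would have to recover quantitative slack from the residual error terms $-q_{2x}\cdot P(OTHS\in[-2x,2x))\cdot P(S)$ accumulated across the $k'$ pairings. Showing that this accumulated slack dominates the discretization deficit $2x - x' \leq x'/k'$, most likely through a direct convolution comparison between the compound-pair distribution on $\{-2x,0,+2x\}$ and the single-DRep distribution on $\{-x',+x'\}$ at matched expected signed contribution, is the main technical hurdle I anticipate.
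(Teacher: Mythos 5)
Your proposal contains a fatal gap at the merging step, and it occurs before the difficulty you flag as the main hurdle. The ingredient ``adding a DRep never hurts under the $\psuccub$ convention'' is false: under $\psuccub$ a tie already counts as a \emph{full} win, so an extra voter can only destroy that win. Concretely, take $\vec y = (y,y)$: with probability $2(1/2+y)(1/2-y)$ the pair splits, the signed sum is $0$, and this is already a success; a new DRep of effort $2x < 2y$ converts it into a loss whenever she votes wrong, so $\psuccub(2x,\vec y) < \psuccub(\vec y)$. Your coupling argument overlooks exactly the probability mass at $OTHS = 0$. Consequently the merging inequality itself fails, already in the base case $\vec y = \emptyset$: one has $\psuccub(x,x) = 1-(1/2-x)^2 = 3/4 + x - x^2$ versus $\psuccub(2x) = 1/2+2x$, and the former is strictly \emph{larger} whenever $x < (\sqrt{2}-1)/2 \approx 0.207$. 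This deficit is of constant order, while the residual terms $q_{2x}\cdot P\bigl(OTHS\in[-2x,2x)\bigr)\cdot P(S)$ you hoped to harvest are only $O(x^3)$, so no bookkeeping of accumulated slack can rescue the iterated pairwise scheme; the inequality you would apply $k'$ times is violated at its first application for small efforts.

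For comparison, the paper's proof avoids sequential merging entirely: it pairs all $2k'$ variables \emph{simultaneously} into $Y_i = (Z_{2i-1}+Z_{2i})/2 \in \{x,0,-x\}$, conditions on the number $\ell$ of zero pairs, and bounds the conditional per-pair correctness probability by $\Pr[Y_i = x \mid Y_i \neq 0] = \tfrac12 + \tfrac{2x}{1+(2x)^2} = \tfrac12 + \tfrac{x'}{1+(x')^2} \leq \tfrac12 + x'$, comparing the conditional profile directly with the $k'$-voter profile. Note that the efforts used there are the optimal budget-exhausting ones, so $x' = 2x$ holds \emph{exactly} ($kax = B = k'ax'$); the floor-discretization slack $2x \leq x'(1+1/k')$ that your last paragraph treats as the central obstacle therefore never arises in the paper's reading of the statement. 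That said, your base-case computation is not wasted: it shows $\psuccub(x,2) > \psuccub(2x,1)$ for $x < (\sqrt{2}-1)/2$, which means the tie-wins convention makes the comparison genuinely delicate in regimes with many ties -- the same phenomenon that makes the paper's conditional step \eqref{eq:pairs-prob-vs-variables} problematic when $\ell$ is large (for $\ell = k'$ the conditional success probability is $1$). So your attempt, while incorrect as a proof strategy, usefully pinpoints that any argument for this statement must confront, rather than couple away, the favorable tie mass.
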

\begin{proof}
Let $\vec{x}$ and $\vec{x'}$ denote the optimal symmetric effort vectors in which $k$ and $k'$ DReps exert positive effort, for the variant with success probability $1$ in case of ties. That is,  $\psuccub(x,k) =\psuccub(\vec{x})$ and $\psuccub(x',k') =\psuccub(\vec{x'})$. We consider the random variables in \eqref{eq:score-rewritten} corresponding to these two optimal solutions, 
\begin{align}
    Z_1,Z_2,\ldots,Z_{k} && Z'_1,Z'_2,\ldots,Z'_{k'}
\end{align}
where $\prob{Z_i=x} = 1/2 + x$ and $\prob{Z'_i=x'} = 1/2 + x'$. Our goal is to prove the following inequality:
\begin{align}
     \psuccub(\vec{x}) = \prob{\sum_{i=1}^{k} Z_i \geq  0} \leq  \psuccub(\vec{x'}) = \prob{\sum_{i=1}^{k'} Z'_i \geq  0}
\end{align}
In order to upper bound $\psuccub(\vec{x})$, we group the $k=2k'$  random variables into $k'$ pairs,  forming $k'$ \emph{new} random variables, as follows: 
\begin{align}
\label{eq:def-yi-pair}
    Y_i := \frac{Z_{2i - 1} +  Z_{2i}}{2} \ . 
\end{align}
These new random variables take values in $\{x,0,-x\}$ and 
\begin{align}\nonumber
  \prob{Y_i =  x | Y_i \neq 0} = & \frac{(1/2 + x)^2}{1 - 2(1/2 + x)(1/2-x)} 
  = \frac{4x^2+1 + 4x}{2(4x^2+1))} 
  \\ \label{eq:pair-variables}
  = & \frac{1}{2} + \frac{2x}{(2x)^2 + 1} = \frac{1}{2} + \frac{x'}{(x')^2 + 1} \leq \frac{1}{2} + x' = \prob{Z_i'=x'} \ . 
\end{align}
Let $E_\ell$ denote the event that $\ell$ out of the $Y_i$'s are equal to 0, and observe that \eqref{eq:pair-variables}  implies 
\begin{align}\label{eq:pairs-prob-vs-variables}
    \prob{\sum_{i=1}^k Y_i \geq 0 \mid  E_\ell} \leq \prob{\sum_{i=1}^{k'} Z_i' \geq 0}=\psuccub(\vec{x'})\ . 
\end{align}
Therefore, 
\begin{align}
\label{eq:pair-s-prob-ub}
    \prob{\sum_{i=1}^k Y_i \geq 0} = \sum_{\ell=1} ^k \prob{\sum_{i=1}^k Y_i \geq 0 | E_\ell} \cdot \prob{E_\ell}
    \stackrel{\eqref{eq:pairs-prob-vs-variables}}{\leq}
    \psuccub(\vec{x'})\cdot 
    \sum_{\ell=1} ^k \prob{E_\ell} =
    \psuccub(\vec{x'}) \ . 
\end{align}
To complete the proof we observe that
\begin{align}
    \psuccub(\vec{x}) = \prob{\sum_{i=1}^{2k} Z_i \geq 0} \stackrel{\eqref{eq:def-yi-pair}}{=} \prob{\sum_{i=1}^k Y_i \geq 0} \stackrel{\eqref{eq:pair-s-prob-ub}}{\leq} \psuccub(\vec{x'}) 
\end{align}
which completes the proof. 
\end{proof}

\end{document}